
\documentclass[twocolumn,british,pra,superscriptaddress]{revtex4-1}

\pdfoutput=1
\usepackage[utf8]{inputenc}
\usepackage[T1]{fontenc}  
\setcounter{secnumdepth}{3}
\usepackage{babel}
\usepackage{graphicx}
\graphicspath{ {./Images/} }
\usepackage{amsmath}
\usepackage{amssymb}
\usepackage{url}
\usepackage{mathtools}
\usepackage{graphics,graphicx,color,float}
\usepackage{float}
\usepackage{etoolbox} 
\usepackage{amsfonts}
\usepackage{amsthm, hyperref}
\usepackage[usenames,dvipsnames]{xcolor}
\usepackage{hyperref}
\usepackage{tikz}
\usepackage[T1]{fontenc}
\usetikzlibrary{matrix}
\usepackage{xr-hyper}
\usepackage{tikz}
\usetikzlibrary{calc}

\newcommand{\la}{\langle}
\newcommand{\ra}{\rangle}

\newcommand{\lam}{\lambda}

\newcommand{\R}{\mathbb{R}}
\newcommand{\mcs}{\mathcal{S}}

\newcommand{\C}{\mathbb{C}}

\def\bpm{\begin{pmatrix}}
\def\epm{\end{pmatrix}}

\newcommand{\beq}{\begin{equation}}
\newcommand{\enq}{\end{equation}}
\newcommand{\bel}{\begin{lemma}}
\newcommand{\enl}{\end{lemma}}
\newcommand{\bet}{\begin{theorem}}
\newcommand{\ent}{\end{theorem}}

\newcommand{\tr}{\mathrm{Tr}}
\makeatletter
\newcommand*{\addFileDependency}[1]{
  \typeout{(#1)}
  \@addtofilelist{#1}
  \IfFileExists{#1}{}{\typeout{No file #1.}}
}
\makeatother
 
\newcommand*{\myexternaldocument}[1]{%
    \externaldocument{#1}%
    \addFileDependency{#1.tex}%
    \addFileDependency{#1.aux}%
}

\newcommand{\Tr}{\mathrm{tr}}

\usepackage{booktabs,tabularx}

\newcommand{\suppress}[1]{}

\newcommand{\bra}[1]{\langle #1|}
\newcommand{\ket}[1]{|#1 \rangle}
\newcommand{\braket}[2]{\langle #1|#2\rangle}
\myexternaldocument{SM}

\mathchardef\mhyphen="2D

\def\be{\begin{equation}}
\def\ee{\end{equation}}

\newcommand{\gexcl}{\mathcal{G}_{\mathrm{ex}}}
\newcommand{\h}{\mathcal{H}}

\makeatletter
\newcommand*{\rom}[1]{\expandafter\@slowromancap\romannumeral #1@}
\makeatother

\makeatletter
\appto{\appendix}{%
 \@ifstar{\def\theequation@prefix{A.}}%
 {}%
}
\makeatother

\mathchardef\mhyphen="2D
\newtheorem{theorem}{Theorem}
\newtheorem{lemma}[theorem]{Lemma}

\newtheorem*{main result}{Main Theorem}

\newtheorem*{theorem*}{Theorem}

\begin{document}
	
\title{Local certification of programmable quantum devices of  arbitrary high dimensionality }


\author{Kishor Bharti}
\thanks{equal contribution}
\affiliation{Centre for Quantum Technologies, National University of Singapore}

\author{Maharshi Ray}
\thanks{equal contribution}
\affiliation{Centre for Quantum Technologies, National University of Singapore}

\author{Antonios Varvitsiotis}
\affiliation{Engineering Systems and Design Pillar, Singapore University of Technology and Design}

\author{Ad\'{a}n Cabello}
\affiliation{Departamento de F\'{i}sica Aplicada II, Universidad de Sevilla, E-41012 Sevilla, Spain}
\affiliation{Instituto Carlos I de F\'{\i}sica Te\'orica y Computacional, Universidad de Sevilla, E-41012 Sevilla, Spain}

\author{Leong-Chuan Kwek}
\affiliation{Centre for Quantum Technologies, National University of Singapore} \affiliation{MajuLab, CNRS-UNS-NUS-NTU International Joint Research Unit, Singapore UMI 3654, Singapore}
\affiliation{National Institute of Education, Nanyang Technological University, Singapore 637616, Singapore}
%

\maketitle

{\bf{
The onset of the  era of fully-programmable error-corrected quantum computers will be marked by major breakthroughs in all areas of science and engineering. These devices promise to have significant technological and societal impact, notable examples being the analysis of big data    through better machine learning algorithms  and     the design of  new materials. 
Nevertheless, the capacity  of quantum computers to faithfully implement quantum algorithms  relies crucially on their ability  to prepare
 specific high-dimensional and high-purity quantum states, together with suitable quantum measurements. 
Thus, 
 the  unambiguous certification of these requirements without assumptions on the inner workings  of the quantum computer is critical to the development of trusted quantum processors. One of the most important approaches for  benchmarking quantum devices  is through the mechanism of self-testing \cite{Yao_self,MY98,SB19} that requires a pair of entangled non-communicating quantum devices \cite{reichardt2013classical}. 
 Nevertheless, although computation typically happens in a localized fashion, no local self-testing scheme is known to benchmark high dimensional states and measurements.  
 Here, we show that the quantum self-testing paradigm can be employed to an individual quantum computer that  
is modelled as a programmable black box by introducing  a noise-tolerant certification scheme. We substantiate the applicability of our scheme by providing a family of    outcome statistics whose observation 
certifies 
that the computer is producing  specific high-dimensional quantum states and implementing specific  measurements.}}

\medskip 
Classical computing devices store and manipulate sequences of binary numbers, which are called  bits,   to perform   computational tasks of the utmost social significance   such as healthcare scheduling, weather prediction, and  routing of vehicles. However, as the number of variables  involved in a computational task  
grows beyond a limit, such as in the simulation of $50-60$ spin half particles, even  state-of-the-art supercomputers fail spectacularly  to carry out the required computations. Harnessing the  properties of quantum theory to carry out computations  beyond the reach  of classical  supercomputers is one major objective of quantum information processing. Quantum algorithms  choreograph   the state of a number of qubits (the quantum analogue of classical bits) in an intelligent fashion, which in turn allows to perform highly complex computations. This further requires the error corrected programmable universal quantum computer to be able to generate some specific quantum states and perform specific measurements on them. Such a requirement is a necessary condition and 
must be fulfilled by these powerful futuristic devices. 
\par

In a faithfully minimimalistic scenario, developing  trust in the functionality  of quantum devices necessitates   schemes for certification and benchmarking   that do not rely     on any assumptions  concerning   the inner working of the devices. One of the most  important  approaches for  establishing trust in third-party quantum hardware is via self-testing \cite{popescu1992generic,Yao_self,mckague2012robust,bugliesi2006automata, tsirel1987quantum, summers1987bell}, where the quantum apparatus is modelled as a black box. In this setting,  interactions with the device correspond to  measurements,  
and self-testing leads to   guarantees regarding the uniqueness of the measurement settings and the underlying state preparation, based solely on the measurement statistics.   The first approach to self-testing relied on the use of Bell inequalities, i.e., linear expressions  involving  probabilities of measurement-outcome events that can be carried out in a Bell experiment \cite{Bell64}.  Self-testing  based on Bell experiments is a powerful  technique allowing  to develop insights about the inner working of a pair of non-communicating entangled devices. However, computing typically takes place  in a localized setting and consequently, any  scheme  for benchmarking a progammable quantum computer must be of a local nature  to be of any practical relevance. The first such  local self-testing scheme was introduced  recently in \citep{BRVWCK19}  within the framework of contextuality, a broad generalization of Bell non-locality \cite{KS67, Bell64}. However, the contextuality-based self-testing scheme from \cite{BRVWCK19} was limited in scope due to its applicability to three dimensional quantum systems only. As the dimensional of  the  Hilbert space corresponding to  a quantum computing device grows exponentially as a function of the number of qubits, 
this naturally leads to the problem of identifying 
local certification schemes for arbitrary high dimensionality. 

 \par
In this article, we provide a contextuality-based self-testing scheme for local certification of 
programmable quantum devices of arbitrary high dimensionality. A contextuality scenario is  characterized by a set of measurement events, where two  events are  mutually exclusive if they correspond to same measurement but different outcomes. The exclusivity relations between the measurement  events  can  be conveniently 
 encoded as  edges in an  undirected graph, called the 
 exclusivity graph. 
 The seemingly simple idea to  use  graphs to represent exclusivity relations has  spearheaded  the development of a new line of research at the interface of   graph theory and contextuality~\cite{CSW}. The linear inequalities, the violation of which witnesses contextuality, are referred to as non-contextuality inequalities. Bell inequalities are a special type of non-contextuality inequalities where  the ``contexts'' are  provided via the space-like separation of the parties involved \cite{Bell64, CHSH}. The first ``local'' non-contextuality inequality violated by quantum theory was identified  by Klyachko, Can, Binicio\ifmmode \breve{g}\else \u{g}\fi{}lu and Shumovsky (KCBS)~\cite{KCBS}. The bound on a non-contextuality inequality in a non-contextual hidden variable (NCHV) model is called the NCHV bound. Quantum theory violates the NCHV bound for suitably chosen state and measurement settings and thus manifests as a contextual theory.  
  To  any exclusivity graph we associate a  ``canonical'' non-contextuality inequality.
Furthermore, we say that a graph is self-testable, if the corresponding  non-contextuality inequality admits self-testing.  The main contributions of  this work can be summarized as~follows:

{{\em We give a local and robust self-testing scheme for certifying high-dimensional programmable quantum devices based on contextuality~(see~Figure~\ref{fig:Scheme}).}

 As a key ingredient in the  scheme, we show that the family of odd anti-cyclic graphs with at least five vertices are self-testable.  
  Additionally, we extend the protocol given in  \cite{BRVWCK19} (which only provides sufficient conditions for self-testing), in order to determine if a given graph is provably non-self-testable (see Appendix \ref{counterex}). 
 In particular, we show that not all graphs with a positive gap between NCHV bound and the maximum quantum bound for the corresponding canonical non-contextuality inequality admits self-testing by providing an explicit  counterexample.

 \begin{figure}[h]
\centering
\includegraphics[width=0.48\textwidth]{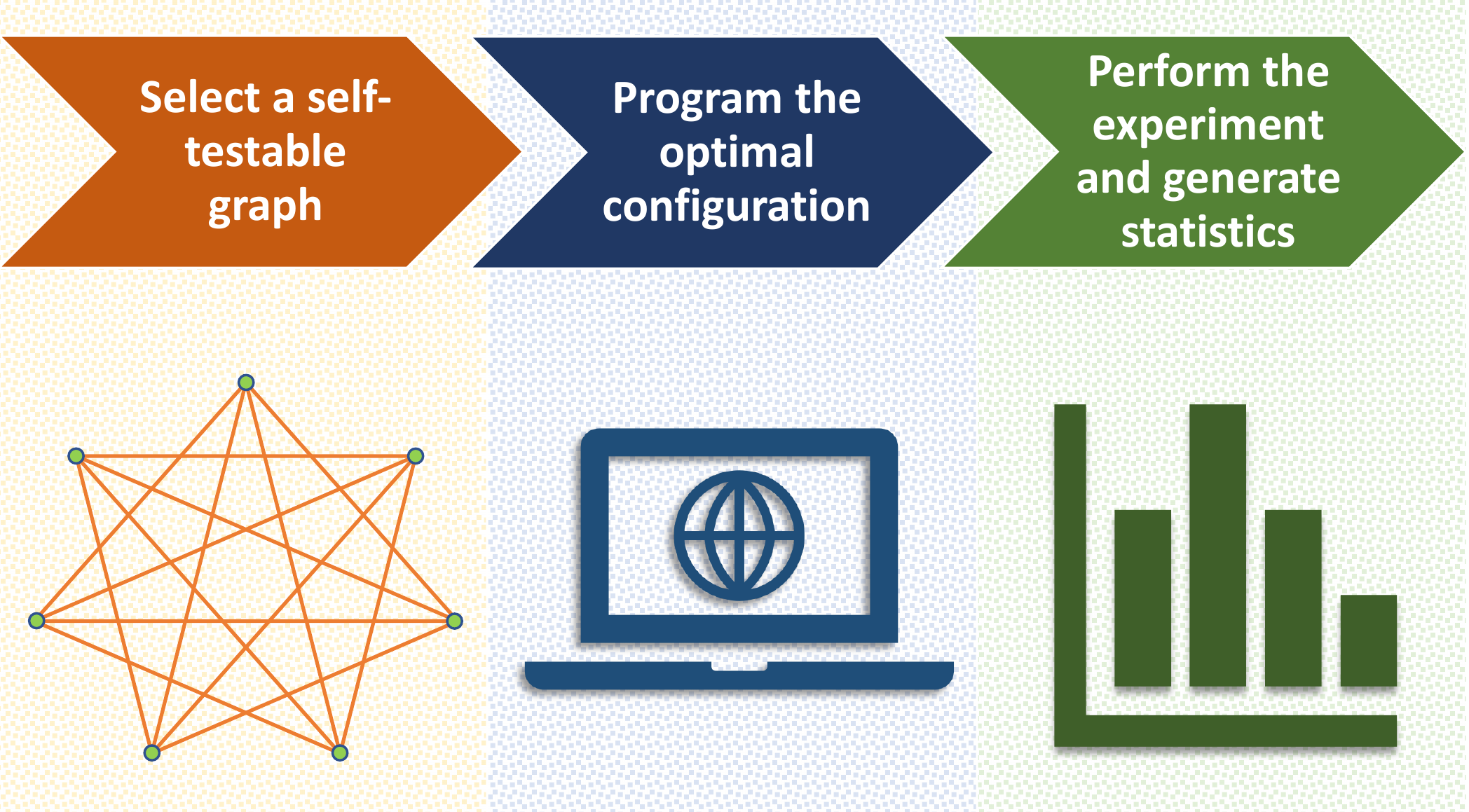}
\caption{The local certification scheme proceeds in three steps. In step $1$, we select a self-testable graph. Then we proceed to program the optimal configuration corresponding to the graph in the quantum computer. In the last step, we carry on the contextuality experiment and generate the measurement statistics. If the measurement statistics correspond to optimal quantum violation, it implies that the device implemented the settings as per the program (upto an isometry). If it reaches close to the optimal violation, the robustness of our scheme still provides confidence about the fidelity of the intended operation with the implemented operation. If it fails (i.e., gives an output outside a pre-established confidence neighbourhood of the extremal value), the protocol indicates that the device is not performing reliably.}
\label{fig:Scheme}
\end{figure}

\medskip 
\par
\noindent {\em Graph approach to contextuality---}
 An arbitrary experimental scenario can be characterized by a set of measurement events $e_1,\ldots,e_n$. Two events are mutually exclusive if they correspond to same measurement but different outcomes. The exclusivity structure of a set of measurement events is captured by the exclusivity graph, denoted  $\gexcl$,  with  nodes $\{1,\ldots,n\}$  (denoted by $[n]$) corresponding  to events  $\{e_i\}_{i=1}^n$. Two nodes  $i$ and $j$ are adjacent (denoted by $i\sim j$) if   $e_i$ and $e_j$ are mutually exclusive.
 
 Given an experimental scenario with exclusivity graph $\gexcl$, a theory assigns probability to the events corresponding to its vertices. The mapping $p: [n]\to [0,1]$, where $p_i+p_j\le 1$, for all $i\sim j$ is called a behaviour. Here, the non-negative numbers $p_i$ refer to the probability of the event $e_i$. We call a behaviour $p$ deterministic non-contextual if all the probabilities $p_i$ are either $0$ or $1$ and the occurrence of a  event does not depend on the possibility of occurrence of other events. The convex hull of all deterministic non-contextual behaviours form a polytope, denoted by ${\cal{P}}_{nc}(\gexcl)$, which contains all possible non-contextual behaviours for the experimental scenario encoded by the graph $\gexcl$. The behaviours which lie outside ${\cal{P}}_{nc}(\gexcl)$ are contextual behaviours. The set of non-contextual behaviours are bounded by finitely many halfspaces, which are called  non-contextuality inequalities. Formally speaking, non-contextuality inequalities correspond to linear inequalities of the form 
 \beq\label{eq:ncineq}
 \sum_{i \in [n]} w_i p_i \leq B_{nc}(\gexcl, w), \ \forall p \in {\cal{P}}_{nc}(\gexcl),
\enq
 where $w_1,\ldots,w_n\ge 0$ and $B_{nc} (\gexcl, w)$ are real scalars. The non-contextuality inequalities with all the weights $\{w_i\text{s}\}$ equal to one will be referred to as canonical non-contextuality inequalities.
By definition, $B_{nc} (\gexcl, w) $ corresponds to the NCHV bound on the linear expression  $ \sum_{i \in [n]} w_i p_i$ and is also equal to the independence number of the exclusivity graph $\gexcl$, defined as  the cardinality of the largest set of pairwise non-adjacent nodes of $\gexcl$ \cite{CSW}.  
 A quantum behaviour has the following  form: 
  \be\label{cwfere}
p_i=\Tr(\rho\Pi_i), \forall i\in [n] \text{ and } \Tr(\Pi_i\Pi_j)=0, \text{ for } i\sim j,
\ee 
 for some quantum state $\rho$ and quantum projectors $\Pi_1,\ldots, \Pi_n$ acting on a Hilbert space $\mathcal{H}$.
 An ensemble $\rho, \{\Pi\}_{i=1}^n$ satisfying \eqref{cwfere} is called a {\em quantum realisation} of the behavior $p$. For a given quantum behaviour $p$, there can be multiple quantum realisations.
The set of quantum behaviours is a convex set, which we  denote by ${\cal{P}}_q(\gexcl)$. The maximum value of the linear expression $\sum_{i \in [n]} w_i p_i$, as   $p$  ranges over the set of quantum behaviors ${\cal{P}}_q(\gexcl)$ can exceed the classical bound. We will denote the maximum attainable quantum value by $B_{cq}(\gexcl, w).$ Interestingly, $B_{cq} (\gexcl,w)$ is equal to the {\em Lov\'asz theta number} of the  graph $\gexcl$ and admits a formulation as a tractable optimisation problem known as a semidefinite program \cite{CSW} (see Methods Section).


\medskip 
\par
\noindent {\em Robust self-testing---}
Informally speaking, a non-contextuality inequality ${\cal{I}}$ is said to self-test a quantum realisation  $\rho, \{\Pi\}_{i=1}^n$ if it achieves the quantum bound for the non-contextuality inequality ${\cal{I}}$ and furthermore, all other quantum realisations which achieve the quantum bound corresponding to ${\cal{I}}$  are equivalent to $\rho, \{\Pi\}_{i=1}^n$ up to global isometry. For a formal definition,  the reader is referred to Section \ref{appA} in the Appendix. As discussed in~\cite{BRVWCK19}, the essential ingredient in proving  self-testing results for a non-contextuality inequality ${\cal{I}}$ with underlying exclusivity graph  $\gexcl$ is that the corresponding  Lov\'asz theta  semidefinite program  (cf.  \eqref{theta:primal}) has an unique optimal solution. In the case of the KCBS inequality,  the exclusivity graph   is a pentagon. The configuration corresponding to optimal quantum violation admits an umbrella structure (see Figure~\ref{fig:umbrella}). The KCBS inequality has been generalized   to odd $n$-cycle exclusivity graphs, which are called $\rm{KCBS}_n$ inequalities. The $\rm{KCBS}_n$ inequalities are the canonical non-concontextuality inequalities for an odd cycle graph and admit robust self-testing \cite{BRVWCK19}.

\medskip 

\par
\noindent {\em  Self-testing anti-cycles---}  Building on the link between graph theory and contextuality \cite{CDLP13},   in combination  with   the  strong perfect graph theorem \cite{strong},  it was shown in \cite{CSW, CDLP13}   that the presence of certain  exclusivity  structures  is a necessary and sufficient condition for a  non-contextuality  scenario to witness quantum contextuality.   These  fundamental exclusivity structures  correspond to an odd number of events that are either cyclically or anti-cyclically  exclusive. 
In the graph theory literature,  odd cycles and odd anti-cycles are  called odd holes and  odd anti-holes respectively. 
\begin{figure}[h]
\centering
\includegraphics[width=0.4\textwidth]{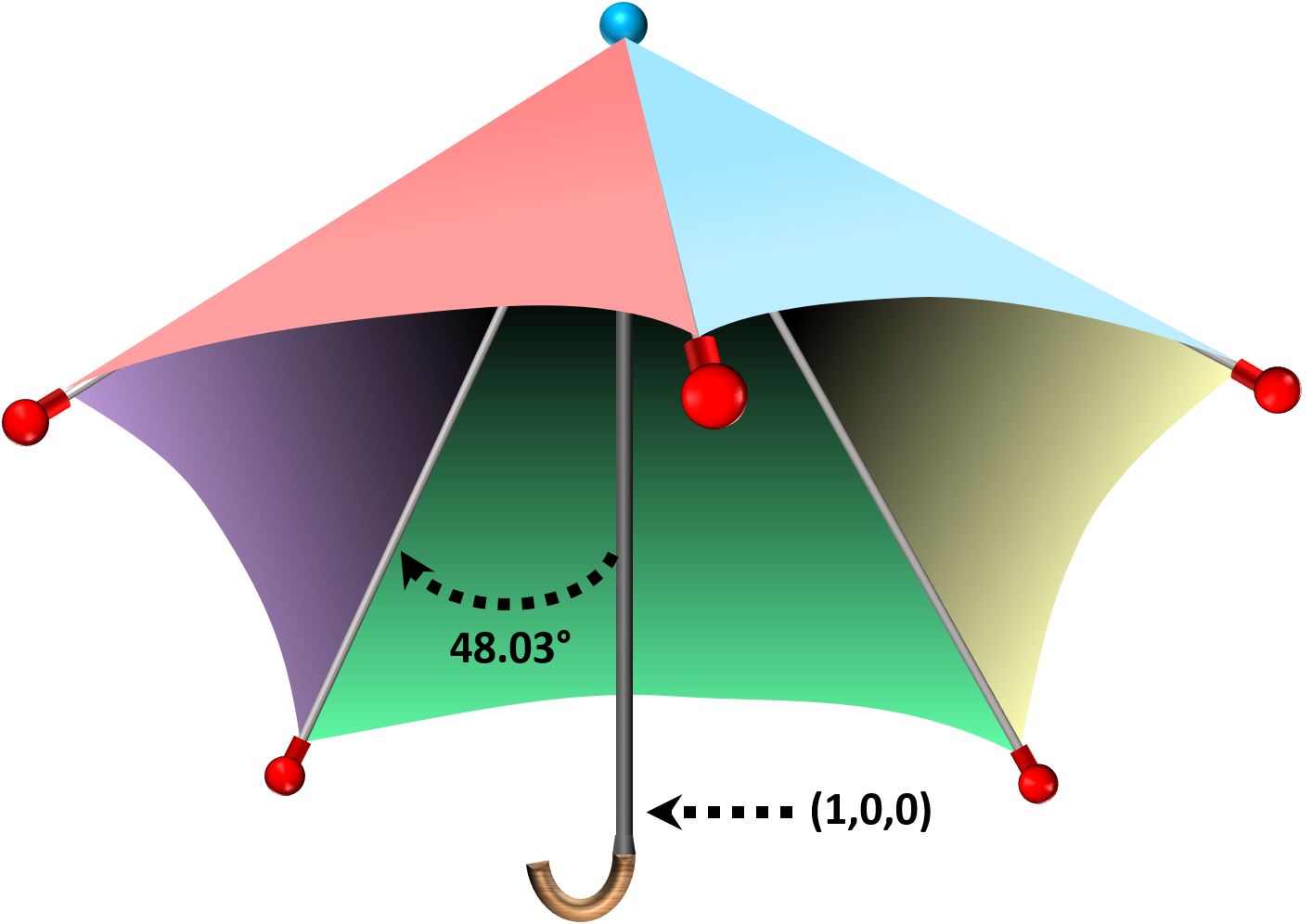}
\caption{The geometry corresponding to the unique optimal configuration of $C_5$ looks like an umbrella. The handle of the umbrella is the quantum state and the vectors corresponding to the projectors are the spokes of the umbrella.}
\label{fig:umbrella}
\end{figure}
The canonical non-contextuality inequality 
for an anti-hole is given by:
\begin{equation}
\label{anti_ineq}
\sum_{i=1}^{n} p_i \leq 2, \ \text{ for all }  p \in {\cal{P}}_{nc} \left(\overline{C_n}\right),
\end{equation} 
which we call 
 anti-hole inequalities 
 and correspond to  facets of the classical polytope for $\overline{C_n}$. The quantum bound for the anti-hole inequalities, i.e.,  the Lov\'asz theta number for $\overline{C_n}$ is ${1+\cos{\pi \over n}}\over \cos{\pi \over n}$. A canonical quantum ensemble which achieves the quantum value for the anti-hole inequalities corresponding to odd $n$ is given by $n-2$ dimensional quantum state and projectors. In fact t{\emph{he dimension of quantum system achieving optimal quantum violation of anti-hole inequalities must be $n-2$ for odd~$n$.}} We provide a proof of the aforementioned claim in  Appendix \ref{appDim}. 
Moreover, we show 
that {\emph{the anti-hole inequalities admit robust self-testing.}} This fact is proven  in  the Appendix (Sections \ref{appA} and \ref{appB}).

 
As  the presence of holes and/or  anti-holes  in a contextuality scenario   dictates 
the  possibility of quantum advantage   \cite{CDLP13}, our result regarding anti-hole self-testing in combination with~\cite{BRVWCK19}  imply that {\em all non-contextuality inequalities which are fundamental to quantum theory admit local robust self-testing}. This is because the generalized KCBS  and anti-hole inequalities are the unique facet-defining non-contextuality inequalities for their respective odd hole and anti-hole exclusivity scenarios~\cite{BAKR18}.  

%

\medskip 

\noindent {\em A new local certification scheme---}Leveraging our self-testing results we propose a local certification scheme that proceeds in three phases  described below:
\begin{enumerate}
\item {\emph{Graph selection:} Choose a  self-testable graph using the algorithm detailed in Figure \ref{fig:flowchart}. In particular, one can use the anti-cycle graphs for benchmarking arbitrary high-dimensional quantum realisations.  }
\item {\emph{Configuration coding:} Solve the Lov\'asz theta SDP corresponding to the graph chosen in Step 1 and compute the unique optimal configuration using the gram vectors of the solution. Next, program the device to create this configuration. }
\item  {\emph{Experimentation:} Perform a contextuality experiment using this configuration and generate the measurement statistics. If the statistics achieve the Lov\'asz theta bound or is within a pre-established confidence neighbourhood, we  trust the device. Otherwise, the protocol indicates that the device is not performing as a reliable and sufficiently non-noisy quantum computer.}

\end{enumerate}

\noindent {\em Conclusion---} One of the central problems in quantum computation is to devise protocols that allow  an efficient classical machine (also known as verifier) to verify the computation done by an efficient quantum machine (also known as prover) \cite{Got04}. 
An important approach has been 
proposed for a weaker variant of the problem where the verifier has access to a small quantum computer of her own, or she has to verify two entangled non-communicating provers \cite{broadbent2009universal, fitzsimons2017unconditionally, aharonov2017interactive, aharonov2008interactive, reichardt2013classical}, which  
 relies on self-testing properties of parallel CHSH games. More importantly, the self-testing nature of parallel CHSH games along with protocols for state and process tomography and computation by teleportation,  \cite{reichardt2013classical} devised a method for realising arbitrary quantum dynamics without assuming anything about the internal structure of the non-communicating quantum devices. In a seminal paper, Mahadev provided an affirmative answer to the problem involving a single verifier and a single prover \cite{mahadev2018classical}. The scheme requires a complexity-theoretic assumption, i.e.,  the verifier has access to post-quantum cryptography, which the efficient quantum prover can not break. Though our scheme depends on the self-testing properties of local non-contextuality inequalities, further work needs to be done for realising arbitrary dynamics in a way similar to \cite{reichardt2013classical}. The classical simulation of the outcome statistics corresponding to our scheme requires systems of higher dimension \cite{KGPLC11,FK17,CGGX18,Budroni19} and has a thermodynamical extra cost \cite{CGGLK16}. The only assumption made in our framework is that measurements are projective, which can be tested experimentally \cite{MZLCAH18}. 

\section*{Methods} \label{sec: method}

The main tool we use in this work  to show that anti-hole inequalities admit robust self-testing is Theorem~\ref{res:uniqueness}, shown in~\cite{BRVWCK19}, which  
provides a sufficient condition for a graph to be self-testable. This result relies crucially 
 on the rich properties  of a powerful class of   mathematical optimisation models, known as 
Semidefinite programs (SDPs) (see Appendix \ref{sec:SDPs}).  SDPs constitute   a vast generalisation of linear optimisation models where scalar variables are  replaced by vectors and the constraints and objective function are affine  in terms of the inner products of the vectors. Equivalently, collecting all pairwise inner products of these vectors in matrix, known as the Gram matrix, an SDP corresponds to optimising a linear function of the Gram matrix  subject to affine constraints. Analogously to linear programs, to any SDP there is an associated  a dual program whose value is equal  to the primal under reasonable assumptions. 
Next, we single out certain properties of primal-dual solutions that are of relevance to his work. A pair of primal dual optimal solutions ($X^*, Z^*$) with no duality gap (i.e. $\Tr(X^*Z^*) =~0$), satisfies  {\em strict complementarity} if the ${\rm Range}(X^*)$ and ${\rm Range}(Z^*)$ give a direct sum decomposition of the underlying space. Furthermore,  an optimal dual solution $Z^*$  with rank $r$ is   {\em dual nondegenerate} if the tangent space at $Z^*$ of the manifold of symmetric matrices with rank equal to $r$ together  with the linear space of matrices defining the SDP span the entire space of symmetric matrices. 
In this work we focus on the Lov\'asz theta SDP, see \eqref{theta:primal} in Appendix~\ref{appA}.
 The proof of our main result  involves two main steps. First, we construct a dual optimal solution of \eqref{theta:dual2} for an odd-cycle graph   by providing an explicit mapping between the Gram vectors of a primal (\ref{theta:primal}) optimal solution of an odd-cycle graph and the Gram vectors of a dual optimal solution of the complement graph; see Theorem~\ref{Const} for details. Next, once we construct a dual optimal solution, we  show in Theorem~\ref{nondegeneracy}  that it satisfies the non-degeneracy conditions given in~\eqref{algcon}.  By Theorem~\ref{res:uniqueness} , this shows that anti-hole inequalities admit robust self-testing. Details of the proofs can be found in~Appendix~\ref{appB}.  Additionally, we show that not all graphs admit self-testing by providing a counter example of such a graph (see Appendix~\ref{counterex}).
   The overall scheme for determining  whether a graph is self-testable  (equivalently whether the primal optimal solution of the Lov\'asz theta SDP corresponding to that graph is unique or not) is provided in the form of a flowchart in Figure~\ref{fig:flowchart}.    
\vspace{0.2cm}
\begin{figure}[h]
\centering
\includegraphics[width=0.5\textwidth]{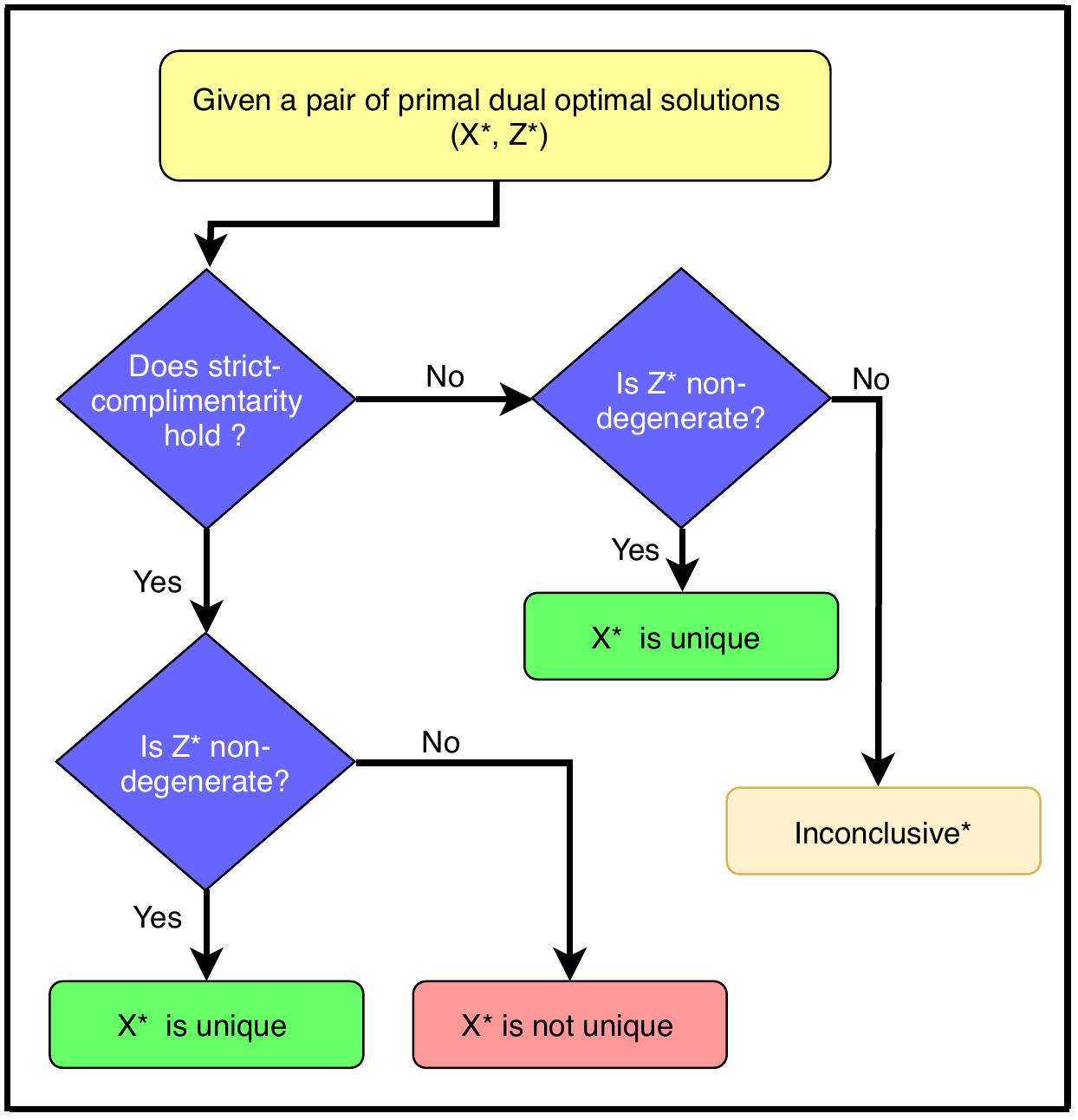}
\caption{Flowchart for determining (non)uniqueness of primal solution(s). The $(*)$ refers to the fact that one can still hope to arrive at a definitive answer by restarting the algorithm using a different dual optimal solution (if it exists).}
\label{fig:flowchart}
\end{figure}

\par 
\noindent {\em Acknowledgements.---}
We thank the National Research Foundation of Singapore, the Ministry of Education of Singapore, MINECO Project No.~FIS2017-89609-P with FEDER funds, and the Knut and Alice Wallenberg Foundation for financial support.
A significant part of the research project  Local certification of programmable quantum devices of arbitrary high dimensionalitywas carried out during ``New directions in quantum information'' conference organized by  Nordita, the Nordic Institute for Theoretical Physics. 
\bibliography{CST.bib}
\bibliographystyle{naturemag}


\appendix

\section{Semidefinite programming  basics}\label{sec:SDPs}
A  semidefinite program (SDP) is given by an optimisation problem of the following form
 \be
 \underset{{ X}}{{\sup}} \left\{ \la C,{ X}\ra : X \in \mcs^n_+,\ \la A_i,{ X} \ra=b_i \ (i\in [m]) \right\},\tag{{P}}\label{primalintro}
 \ee
 where $\mathcal{S}^n_+$ denotes  the cone of $n\times n$ real positive semidefinite matrices and $\la X,Y\ra =\Tr(X^TY)$. The corresponding dual problem is given by
\be
\underset{y,Z}{{\inf}}\left\{ \sum_{i=1}^m b_iy_i\ :\ \sum_{i=1}^my_iA_i-C=Z\in \mcs^n_+\right\}\tag{{D}}\label{dualintro}.
\ee
A pair of primal-dual optimal solutions ($X^*, Z^*$) with no duality gap (i.e. $\Tr(X^*Z^*) =~0$), satisfies  {\em strict complementarity} if 
\begin{equation}\label{SCalg}
{\rm rank}(X^*) + {\rm rank}(Z^*) = n.
\end{equation}
Lastly, an optimal dual solution $Z^*$  is called  {\em dual nondegenerate} 
if 
 the  linear system in the symmetric matrix variable $M$
\[ MZ^* = \tr(MA_1) =\ldots= \tr(MA_m) =0,
\]only admits the trivial solution $M=0$.

Central to this work is the  Lov\'asz theta SDP corresponding to  a graph $G$,  whose primal formulation is:
\be\label{theta:primal}\tag{$P_G$}
\begin{aligned} 
\vartheta(G) = \max & \ \sum_{i=1}^n { X}_{ii} \\
{\rm s.t.} & \ { X}_{ii}={ X}_{0i}, \  i\in [n],\\
 & \ { X}_{ij}=0,\  i \sim j,\\
& \ X_{00}=1,\ X\in \mathcal{S}^{1+n}_+,
\end{aligned}
\ee
and the dual formulation we use is given by:
\be\label{theta:dual2}\tag{$D_G$}
\begin{aligned} 
\vartheta(G) = \min & \ Z_{00} \\
{\rm s.t.} & \ { Z}_{ii}= -(2Z_{0i}+1), \  i\in [n],\\
 & \ { Z}_{ij}=0,\ i \nsim j,\\
& \ Z\in \mathcal{S}^{1+n}_+.
\end{aligned}
\ee

\section{Robust self-testing}\label{appA}
 To prove our main result  we use the following definitions  from~\cite{BRVWCK19}. A non-contextuality inequality  $\sum_{i \in [n]} w_i p_i \leq B_{nc}(\gexcl, w)$ is a {\em self-test}  for the realisation $\{\ket{u_i}\bra{u_i}\}_{i=0}^n$~if:
\begin{enumerate}
\item $\{\ket{u_i}\bra{u_i}\}_{i=0}^n$ achieves the quantum supremum $B_{qc}(\gexcl, w)$;
\item For any other realisation $\{\ket{u_i'}\bra{u_i'}\}_{i=0}^n$ that also achieves $B_{qc}(\gexcl, w)$, there exists an isometry $V$ such that 
\beq\label{eq:st}
V\ket{u_i}\bra{u_i}V^{\dagger}=\ket{u_i'}\bra{u_i'}, \quad 0\le i\le n.
\enq
\end{enumerate}
Furthermore, a non-contextuality inequality $\sum_{i \in [n]} w_i p_i \leq B_{nc}(\gexcl, w)$ is an {\em $(\epsilon, r)$-robust self-test} for $\{\ket{u_i}\bra{u_i}\}_{i=0}^n$ if it is a self-test, and furthermore, for any other realisation $\{\ket{u_i'}\bra{u_i'}\}_{i=0}^n$ satisfying 
$$\sum_{i=1}^n w_i |\braket{u_i'}{u'_0}|^2\ge B_{qc}(\gexcl, w)-\epsilon,$$
there exists an isometry $V$ such~that 
\be\label{eq:rst}
\|V\ket{{u_i}}\bra{{u_i}}V^{\dagger} - \ket{{u'_i}}\bra{{u'_i}}\| \leq \mathcal{O}\left({\epsilon}^r\right), \quad 0\le i\le n. \ee 

The proof of our main result hinges on the following theorem (first introduced in \cite{BRVWCK19}):
\begin{theorem}  \label{res:uniqueness} Consider a  non-contextuality inequality 
$\sum_{i=1}^{n} w_ip_i \le B_{nc}(\gexcl, w)$. Assume that  
\begin{enumerate}
\item There exists an optimal quantum realisation $\{\ket{u_i}\bra{u_i}\}_{i=0}^n$ such that  $$ \sum_i w_i \vert \braket{u_i}{u_0}\vert^2 = B_{qc}(\gexcl, w)$$ and $\braket{u_0}{u_i}\ne 0, $ for all $1\le i\le n$, and
\item There exists   a dual optimal solution $Z^*$ for the SDP \eqref{theta:dual2} such  that the homogeneous linear system 
\be \label{algcon}
\begin{aligned}
M_{0i}  &= M_{ii},  \text{ for all } 1 \leq i \leq n, \\
M_{ij} &= 0,  \text{ for all } i \sim j, \\
MZ^*&= 0,
\end{aligned}
\ee
in the symmetric matrix variable $M$ only admits the trivial solution $M=0$.
\end{enumerate}
Then,  the non-contextuality inequality is an $(\epsilon, {1\over 2})$-robust self-test for $\{\ket{u_i}\bra{u_i}\}_{i=0}^n$.
\end{theorem}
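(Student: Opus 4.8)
The plan is to reduce exact self-testing to \emph{uniqueness of the primal optimal solution} of the Lov\'asz theta SDP \eqref{theta:primal}, and then to upgrade this to the quantitative bound \eqref{eq:rst} by a perturbation argument. First I would set up a dictionary between optimal quantum realisations and primal optima of \eqref{theta:primal}. Given a realisation $\{\ket{u_i}\bra{u_i}\}_{i=0}^n$ attaining $B_{qc}(\gexcl,w)$, I build the Gram matrix $X$ of the handle-and-spoke vectors $v_0=\ket{u_0}$, $v_i\propto\ket{u_i}$, normalised so that the umbrella constraints $X_{ii}=X_{0i}$, $X_{ij}=0$ for $i\sim j$, and $X_{00}=1$ hold exactly (orthogonality of exclusive spokes is automatic from $\tr(\Pi_i\Pi_j)=0$); optimality of the realisation then makes $X$ primal optimal, and conversely every primal optimum arises this way. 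Condition~(1), $\braket{u_0}{u_i}\ne 0$ for all $i$, guarantees that no spoke collapses to the zero vector, so this correspondence and its inverse (recovering rank-one projectors from the Gram vectors) are well defined and locally Lipschitz. Since two tuples of vectors with the same Gram matrix differ exactly by a global isometry, optimal realisations modulo isometry are in bijection with the primal optima of \eqref{theta:primal}.

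Next I would establish exact self-testing. The key observation is that the homogeneous system \eqref{algcon} is \emph{precisely} the dual-nondegeneracy condition for $Z^*$ written out for \eqref{theta:primal}: the equations $M_{0i}=M_{ii}$ and $M_{ij}=0$ ($i\sim j$) say that the symmetric matrix $M$ is orthogonal to every constraint matrix $A_k$ of the SDP (that is, $\tr(MA_k)=0$), while $MZ^*=0$ is the range condition. By the standard characterisation of primal uniqueness in semidefinite programming (dual nondegeneracy of an optimal $Z^*$ forces a unique primal optimum), \eqref{theta:primal} then has a unique optimal solution $X^*$. Consequently any two realisations attaining $B_{qc}(\gexcl,w)$ produce the \emph{same} Gram matrix $X^*$, hence are related by an isometry $V$, which is exactly the self-test relation \eqref{eq:st}.

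Finally, for the $(\epsilon,\tfrac12)$-robust part I would argue by perturbation. A realisation with $\sum_i w_i|\braket{u_i'}{u_0'}|^2\ge B_{qc}(\gexcl,w)-\epsilon$ yields a feasible point $X'$ of \eqref{theta:primal} whose objective is within $\epsilon$ of the optimum; pairing $X'$ against the dual optimum $Z^*$ gives the duality relation $\tr(X'Z^*)=\big(B_{qc}(\gexcl,w)-\text{obj}(X')\big)=O(\epsilon)$. Writing $\Delta=X'-X^*$ and using that $Z^*\succeq 0$ vanishes on $\mathrm{Range}(X^*)$, a Schur-complement computation on the constraint $X^*+\Delta\succeq 0$ converts this linear bound into $\|\Delta\|=O(\sqrt{\epsilon})$: it is the curvature of the positive-semidefinite cone at its boundary that produces the square root, and hence the exponent $r=\tfrac12$. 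Dual nondegeneracy (Condition~(2)) is what guarantees that $X^*$ is the unique optimum and that no flat feasible direction is left unpenalised, so that the \emph{whole} displacement is controlled at order $\sqrt{\epsilon}$. Transporting this through the Gram-vector-to-projector map, which Condition~(1) keeps locally Lipschitz by bounding the spokes away from zero, and aligning the realisations by the optimal isometry $V$, yields $\|V\ket{u_i}\bra{u_i}V^\dagger-\ket{u_i'}\bra{u_i'}\|=O(\epsilon^{1/2})$, i.e.\ \eqref{eq:rst} with $r=\tfrac12$.

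The main obstacle is the robustness estimate: turning the qualitative uniqueness supplied by Condition~(2) into the \emph{quantitative} error bound $\|X'-X^*\|=O(\sqrt{\epsilon})$ with the sharp exponent, and certifying that no feasible direction is flatter than quadratic. A second delicate point is that a near-optimal realisation may live in a Hilbert space of different (possibly larger) dimension than the target one, so both $\|X'-X^*\|$ and the isometry $V$ must be defined after embedding the two realisations into a common space and minimising over $V$; making this embedding, and the resulting Lipschitz transport, uniform in $\epsilon$ is where the analysis becomes most technical.
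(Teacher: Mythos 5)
Your overall strategy matches the argument this theorem actually rests on (note the paper does not reprove Theorem~\ref{res:uniqueness}; it imports it from \cite{BRVWCK19}): translate realisations into Gram matrices feasible for the Lov\'asz theta SDP; observe that the system \eqref{algcon} is, up to the omitted row $M_{00}=0$, the dual-nondegeneracy condition for $Z^*$ (omitting that row only makes condition~(2) stronger, so your use of it is sound); invoke the standard Alizadeh--Haeberly--Overton fact that a nondegenerate dual optimum forces a unique primal optimum; and obtain robustness from $\tr(X'Z^*)=O(\epsilon)$ together with the PSD Schur-complement bound on the cross blocks (the source of the exponent $\tfrac12$) and injectivity of the constraint map restricted to matrices supported on $\ker Z^*$ (which is exactly what \eqref{algcon} provides, and is what pins down the ``flat'' directions quantitatively). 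Condition~(1) indeed keeps the Gram-vector-to-projector dictionary well defined and Lipschitz, since $\|v_i\|^2=|\braket{u_i}{u_0}|^2$ is bounded away from zero.

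There is, however, one genuine gap in your first two steps: quantum realisations live in \emph{complex} Hilbert spaces, so the Gram matrix of $(v_0,v_1,\ldots,v_n)$ is Hermitian positive semidefinite, in general not real symmetric. Your claimed bijection between optimal realisations modulo isometry and primal optima of \eqref{theta:primal} is therefore false as stated; the correspondence is with optima of the Hermitian version of the SDP, and uniqueness of the real optimum does not automatically yield uniqueness of the Hermitian optimum. Nor can \eqref{algcon} be applied directly to close this: if $X^*+iY$ were another Hermitian optimum, complementary slackness gives $YZ^*=0$ and $Y$ satisfies the homogeneous constraints, but $Y$ is \emph{antisymmetric}, whereas \eqref{algcon} quantifies only over symmetric $M$. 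This is precisely why the paper devotes a separate lemma (``Complex versus Real SDPs'') to showing that dual nondegeneracy of $Z^*$ for the real SDP lifts to a dual nondegenerate solution $\bigl(\begin{smallmatrix} Z^* & 0\\ 0 & Z^*\end{smallmatrix}\bigr)$ of the realified SDP, whence the complex problem also has a unique solution. Your proof needs this (or an equivalent) step, and the same issue resurfaces in your robustness argument, where the near-optimal Gram matrix $X'$ is Hermitian and the $O(\sqrt{\epsilon})$ stability bound must be established for the realified problem rather than the real one.
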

\section{Self-testing anti-hole inequalities}\label{appB}
The anti-hole non-contextuality inequalities are given by 
$\sum_{i=1}^{n} p_i \leq 2$
for all $p \in {\cal{P}}_{nc} \left(\overline{C_n}\right)$. The quantum bound for the anti-hole inequalities,  i.e.,  the Lov\'asz theta number for $\overline{C_n}$ is ${1+\cos{\pi \over n}}\over \cos{\pi \over n}$ \cite{knuth1994sandwich}. A canonical quantum ensemble which achieves the quantum value for the anti-hole inequalities corresponding to odd $n$ is given by $n-2$ dimensional quantum state and projectors.
Explicitly, the quantum state is 
\be \ket{v_0} = (1,0,\cdots,0)^T,
\ee
but the description of the projectors $\{\Pi_j = \ket{v_j}\bra{v_j}\}_{j=1}^n$ is more involved \cite{CDLP13}. Let us denote the $k$-th component of $\ket{v_j}$ corresponding to projector $\Pi_j$ as $v_{j,k}$. For ${0 \leq j \leq n-1}$ and  $0 \leq k \leq n-3$,
\be
v_{j,0} = \sqrt{\vartheta(\overline{C_n})\over n}
\ee
\be
v_{j,2m-1} = T_{j,m} \cos{R_{j,m}} 
\ee
\be
v_{j,2m} = T_{j,m} \sin{R_{j,m}} 
\ee
for $m = 1,2,\cdots {{n-3}\over 2} $ and
\be
T_{j,m} = (-1)^{j(m+1)}\sqrt{2 \cos({\pi \over n}) + (-1)^{m+1} \cos\left({(m+1)\pi \over n}\right)\over {n \cos{\pi \over n}}},
\ee
\be
R_{j,m} = {j(m+1)\pi \over n}.
\ee
For the anti-hole non-contextuality inequalities, the  ensemble described above achieves the quantum value and satisfies the first condition of Theorem \ref{res:uniqueness}. It remains to establish the existence of a dual optimal solution for the SDP corresponding to the Lov\'asz theta number of  anti-hole graphs  such that the conditions in \eqref{algcon} are satisfied. Towards this goal, we first proceed to provide the explicit form of the dual optimal solution.
\begin{theorem}\label{Const}
Let $X^* = {\rm Gram}(v_0,v_1,\cdots,v_n)$ be the  unique optimal solution for $(P_{C_n})$. Then,  
\[ 
Z_n^* = \vartheta(\overline{C_n}){\rm Gram}(-v_0,v_1,\cdots, v_n), 
\]
is a dual optimal solution for $(D_{\overline{C_n}}). $ Another useful expression for $Z^*_n$ is given by:
\be
{ Z}_n^{\star} = \left[ 
\begin{array} {c | c}
\vartheta(\overline{C_n}) & -e^\top\\
\hline 
-e & \mbox{circ}(u)
\end{array}
\right]\in \R^{(1+n)\times (1+n)},
\ee
where $e$ is the  vector of all ones of length  $n$, 
$$u = (1,\vartheta(\overline{C_n})\braket{v_1}{v_2}, \cdots, \vartheta(\overline{C_n})\braket{v_1}{v_n}),$$ and 
 $\mbox{circ}(\cdot)$  maps  an $n$-dimensional vector and outputs the corresponding circulant matrix. 
 \end{theorem}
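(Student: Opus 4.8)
The plan is to verify that the proposed $Z_n^\star$ is feasible for the dual SDP $(D_{\overline{C_n}})$ and that its objective value equals $\vartheta(\overline{C_n})$; optimality then follows at once from strong duality, since $\vartheta(\overline{C_n})$ is the common primal--dual value of the Lov\'asz theta SDP. Writing $\vartheta:=\vartheta(\overline{C_n})$ and letting $(v_0,\dots,v_n)$ be Gram vectors realising the unique primal optimum $X^\star$ of $(P_{C_n})$, I would first record the relations coming from primal feasibility and optimality for the cycle: $\la v_0,v_0\ra=1$, the constraint $\la v_i,v_i\ra=\la v_0,v_i\ra$ for each $i\in[n]$, the orthogonality $\la v_i,v_j\ra=0$ whenever $i\sim_{C_n}j$, and $\sum_{i=1}^n\la v_i,v_i\ra=\vartheta(C_n)$. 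Since $Z_n^\star=\vartheta\cdot\mathrm{Gram}(-v_0,v_1,\dots,v_n)$ is a positive scalar times a Gram matrix it lies in $\mathcal{S}^{1+n}_+$ automatically, and its entries are $(Z_n^\star)_{00}=\vartheta$, $(Z_n^\star)_{0i}=-\vartheta\la v_0,v_i\ra$, and $(Z_n^\star)_{ij}=\vartheta\la v_i,v_j\ra$ for $i,j\in[n]$.

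Two of the three dual constraint families are then immediate. Semidefiniteness holds by the remark above. For the support constraint, a non-edge $i\nsim j$ of $\overline{C_n}$ is exactly an edge $i\sim_{C_n}j$ of the cycle, and for such pairs primal orthogonality gives $(Z_n^\star)_{ij}=\vartheta\la v_i,v_j\ra=0$, as required.

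The diagonal constraint $Z_{ii}=-(2Z_{0i}+1)$ carries the real content. Substituting the entries and using $\la v_0,v_i\ra=\la v_i,v_i\ra$, it collapses to $\vartheta\la v_i,v_i\ra=2\vartheta\la v_i,v_i\ra-1$, i.e.\ to the single scalar condition $\la v_i,v_i\ra=1/\vartheta$ for every $i$. \emph{This is where I expect the main obstacle to be}: one must show each spoke of the primal-optimal umbrella has squared length exactly $1/\vartheta(\overline{C_n})$. I would derive this from symmetry plus the Lov\'asz identity for vertex-transitive graphs. Each rotation $i\mapsto i+1\ (\mathrm{mod}\ n)$ and each reflection is an automorphism of $C_n$, so it sends a primal optimum to a primal optimum of equal objective; uniqueness of $X^\star$ then forces its Gram matrix to be invariant under this dihedral action, whence all $\la v_i,v_i\ra$ are equal and, by the sum relation, equal to $\vartheta(C_n)/n$. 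Finally, vertex-transitivity of $C_n$ gives $\vartheta(C_n)\vartheta(\overline{C_n})=n$ (one may instead insert the closed forms $\vartheta(C_n)=\tfrac{n\cos(\pi/n)}{1+\cos(\pi/n)}$ and $\vartheta(\overline{C_n})=\tfrac{1+\cos(\pi/n)}{\cos(\pi/n)}$), so $\vartheta(C_n)/n=1/\vartheta(\overline{C_n})$ and the diagonal constraint holds.

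This establishes feasibility, and the objective value $(Z_n^\star)_{00}=\vartheta(\overline{C_n})$ matches the optimum, so $Z_n^\star$ is dual optimal. The block form then falls out of the same facts: cyclic invariance makes the spoke block $[\la v_i,v_j\ra]_{i,j\in[n]}$ circulant, hence $[\vartheta\la v_i,v_j\ra]=\mathrm{circ}(u)$ with first row $u=(\vartheta\la v_1,v_1\ra,\dots,\vartheta\la v_1,v_n\ra)=(1,\vartheta\la v_1,v_2\ra,\dots,\vartheta\la v_1,v_n\ra)$, using $\la v_1,v_1\ra=1/\vartheta$; and the border entries are $(Z_n^\star)_{0i}=-\vartheta\la v_0,v_i\ra=-\vartheta\cdot(1/\vartheta)=-1$, i.e.\ $-e^\top$, yielding exactly the displayed matrix.
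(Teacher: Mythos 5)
Your proposal is correct and follows essentially the same route as the paper's proof: both verify dual feasibility plus the matching objective value $Z_{00}=\vartheta(\overline{C_n})$, both derive the circulant structure and equal diagonal entries of $X^\star$ from the cycle's automorphisms together with uniqueness of the primal optimum, and both close the argument with Lov\'asz's identity $\vartheta(C_n)\vartheta(\overline{C_n})=n$ for vertex-transitive graphs. Your reduction of the diagonal constraint to the single scalar condition $\langle v_i,v_i\rangle = 1/\vartheta(\overline{C_n})$ is a slightly cleaner packaging of the same computation the paper performs entry by entry.
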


\begin{proof}It was shown in \cite{BRVWCK19} that $(P_{C_n})$ admits a unique optimal solution $X^*$. For any $k=0,1,\ldots, n-1$,  the map taking $i\to i+1$ (modulo $n$)  is an automorphism of $C_n$ (i.e., a bijective map  that preserves adjacency and non-adjacency). In particular, this implies that $X^*$ is circulant and furthermore, constant along each band. Specifically,  all diagonal entries  of $X^*$ are equal, and as $\vartheta(C_n)=\sum_{i=1}X_{ii}^*$, it follows that 
\be\label{cdsvadb}
\braket{v_i}{v_i}=X^*_{ii}=\vartheta(C_n)/n.
\ee  
Analogously, for a pair of indices $i,j$ with $|i-j|=k$ we have that $X^*_{ij}=\braket{v_1}{v_{k+1}}$.  Moreover, be feasibility of $X^*$ we have  that $X^*_{00} =\braket{v_0}{v_0} = 1$. Thus $Z^*_{00}= \vartheta(\overline{C_n})$ has the correct value and it remains to show that $Z^*$ is feasible. 
Next, by feasibility of $X^*$ we have that   $X^*_{ij}=\braket{v_i}{v_j} = 0$, when $i\sim j$    in $C_n$. Thus, by definition of $Z^*$ we have that $Z^*_{ij} = \vartheta(\overline{C_n})\braket{v_i}{v_j} = 0$ for all edges of $C_n$.   Finally we show that $Z^*_{ii}= -(2Z^*_{0i}+1), \  i\in [n]$. Indeed, 
$$Z^*_{ii}=\braket{v_i}{v_i}\vartheta(\overline{C_n})={\vartheta(C_n)\over n}\vartheta(\overline{C_n})=1,$$
where we used \eqref{cdsvadb} and that 
$\vartheta(C_n)\vartheta(\overline{C_n}) = n$ (see Theorem 8 of \cite{lovasz1979shannon}). 
To finish the proof we note that 

\be
\begin{aligned} 
-(2Z^*_{0i}+1) &= -(2\vartheta(\overline{C_n})\braket{-v_0}{v_i}+1) \\
&= 2\vartheta(\overline{C_n})\braket{v_0}{v_i} - 1\\
&= 2\vartheta(\overline{C_n})\braket{v_i}{v_i} - 1\\
& = 1.
\end{aligned}
\ee
where the second last equality follows from the constraint that $\braket{v_0}{v_i} = \braket{v_i}{v_i}$ for $i \in [n]$ and the last equality follows by substituting $\braket{v_i}{v_i}\vartheta(\overline{C_n}) = 1$.
\end{proof}

%


\noindent


Finally, we show that the dual optimal solution satisfies the conditions in \ref{algcon}.
\begin{theorem}\label{nondegeneracy}
The dual optimal solution $Z^*$, corresponding to the complement of an odd-cycle graph satisfies the conditions in \ref{algcon}.
\end{theorem}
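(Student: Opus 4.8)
The plan is to verify that the explicit dual optimal solution $Z^*=Z_n^*$ constructed in Theorem~\ref{Const} forces the homogeneous system~\eqref{algcon} to have only the trivial solution $M=0$. I would work with a symmetric matrix $M\in\mathcal{S}^{1+n}$ indexed by $\{0,1,\ldots,n\}$ and carefully exploit the circulant structure of $Z^*$ together with the adjacency structure of $\overline{C_n}$. The key observation is that the three families of constraints interact strongly: the condition $M_{ij}=0$ for all $i\sim j$ in $\overline{C_n}$ kills almost all off-diagonal entries among the indices $\{1,\ldots,n\}$, leaving only a few potentially nonzero entries corresponding to non-edges of $\overline{C_n}$ (i.e. edges of $C_n$), plus the diagonal entries $M_{ii}$, the border entries $M_{0i}$, and $M_{00}$. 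The constraint $M_{0i}=M_{ii}$ then ties the border to the diagonal.

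First I would enumerate exactly which entries of $M$ survive the exclusivity constraint $M_{ij}=0,\ i\sim j$. In $\overline{C_n}$ two vertices are adjacent unless they are cyclically consecutive in $C_n$, so the only surviving off-diagonal entries $M_{ij}$ with $1\le i<j\le n$ are those with $|i-j|\equiv \pm 1 \pmod n$. Thus $M$ restricted to $\{1,\ldots,n\}$ is supported on its diagonal and the ``cycle band,'' i.e. it looks like a symmetric circulant-patterned tridiagonal-plus-corners matrix but with entries not yet assumed equal. Next I would impose $MZ^*=0$. Since $Z^*$ is invertible precisely when it has full rank, the cleanest route is to determine $\mathrm{rank}(Z^*)$ and $\mathrm{Range}(Z^*)$: if strict complementarity~\eqref{SCalg} holds for the primal-dual pair $(X^*,Z^*)$ of the $\overline{C_n}$ program, then $\mathrm{rank}(Z^*)=(1+n)-\mathrm{rank}(X^*)$, and I would use the known dimension $n-2$ of the optimal anti-hole realisation (established in Appendix~\ref{appDim}) to pin down $\mathrm{rank}(X^*)=n-1$ and hence $\mathrm{rank}(Z^*)=2$. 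The equation $MZ^*=0$ then says every row of $M$ lies in the kernel of $Z^*$ (viewing $Z^*$ as symmetric), i.e. the columns of $M$ lie in $\ker(Z^*)$, a space of dimension $(1+n)-2=n-1$.

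The core computation is then linear-algebraic: combine $MZ^*=0$ with the sparsity pattern just derived. I would write $MZ^*=0$ column by column, using the block form $Z^*=\left[\begin{smallmatrix} \vartheta(\overline{C_n}) & -e^\top\\ -e & \mathrm{circ}(u)\end{smallmatrix}\right]$. Each equation $\sum_k M_{ik}Z^*_{kj}=0$ becomes a relation among the few nonzero $M_{ik}$ and the explicit circulant entries of $Z^*$. Using the automorphism $i\mapsto i+1 \pmod n$ of $\overline{C_n}$ (which also fixes $Z^*$), I would argue that the surviving diagonal entries must all be equal and the surviving band entries must all be equal, reducing the unknowns to a constant number; then the border relation $M_{0i}=M_{ii}$ and the $(0,j)$ rows of $MZ^*=0$ give enough scalar equations to force these remaining constants, and finally $M_{00}$, to vanish. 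Concretely I expect to show $M$ is a scalar multiple of $Z^*$ itself forced into $\ker(Z^*)$, which is impossible unless the scalar is zero.

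The main obstacle will be the $MZ^*=0$ step: establishing $\mathrm{rank}(Z^*)=2$ rigorously (equivalently, verifying strict complementarity and identifying $\ker(Z^*)$ explicitly) and then pushing the resulting scalar relations through the circulant arithmetic to conclude $M=0$ rather than a one-parameter family. The circulant structure is a powerful simplification — it lets me diagonalize $Z^*$ in the Fourier basis $\{\omega^{jk}\}$ with $\omega=e^{2\pi i/n}$ and read off its spectrum, identifying exactly which two Fourier modes survive as the nonzero eigenvalues — so I would lean heavily on a Fourier/representation-theoretic argument to control both $\ker(Z^*)$ and the symmetry-reduced form of $M$, rather than manipulating $(1+n)\times(1+n)$ matrices entrywise. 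The remaining risk is purely computational: confirming that the explicit trigonometric entries $u_k=\vartheta(\overline{C_n})\braket{v_1}{v_{k+1}}$ do not produce an accidental extra kernel direction compatible with the sparsity of $M$, which I would rule out by a direct evaluation of the relevant Fourier coefficients at the two surviving modes.
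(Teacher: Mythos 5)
Your starting point---the sparsity pattern of $M$ forced by $M_{ij}=0$ for $i\sim j$ in $\overline{C_n}$, leaving only the diagonal, the border row/column, and the $C_n$-band---matches the paper's setup exactly. But the two steps you build on top of it contain genuine errors. First, the rank bookkeeping is wrong: the optimal anti-hole realisation in dimension $n-2$ gives $\mathrm{rank}(X^*)=n-2$ (not $n-1$), so even granting strict complementarity you would get $\mathrm{rank}(Z^*)=(n+1)-(n-2)=3$, not $2$. In fact you never need strict complementarity (which is not automatic for SDPs and would itself have to be proved): Theorem~\ref{Const} exhibits $Z^*$ as $\vartheta(\overline{C_n})\,\mathrm{Gram}(-v_0,v_1,\ldots,v_n)$, where the $v_i$ are the three-dimensional umbrella vectors of the odd cycle $C_n$, so $\mathrm{rank}(Z^*)=3$ can be read off directly. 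Consequently $\ker(Z^*)$ has dimension $n-2$, not $n-1$, and your picture of ``exactly two surviving Fourier modes'' (there are three nonzero eigenvalues of the circulant block: the zero mode plus one conjugate pair) would send the whole kernel computation down the wrong path.

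Second, and more fundamentally, the symmetry reduction is invalid as stated: the cyclic automorphism $i\mapsto i+1$ leaves the \emph{solution space} of the homogeneous system~\eqref{algcon} invariant, but it does not force an \emph{individual} solution $M$ to be shift-invariant, so you may not conclude that the diagonal entries of $M$ are all equal and the band entries are all equal. A nonzero solution could lie entirely in a nontrivial isotypic component of the $\mathbb{Z}_n$-action, which your reduction would never see; repairing this requires carrying the Fourier decomposition through \emph{all} modes, which is precisely the work your proposal defers to ``computational risk.'' The paper needs none of this machinery: it writes $MZ^*=\left(\begin{smallmatrix} q & r^\top \\ r & T\end{smallmatrix}\right)$ and extracts from the $(0,0)$ entry, the band entries $T_{i,\underline{i\pm1}}$, and the border $r$ the chain $\sum_i m_i=0$, then $m_{n+1}=\cdots=m_{2n}$, then $m_1=\cdots=m_n$, whence everything vanishes---i.e., the equalities you tried to obtain by symmetry are \emph{derived} from the equations themselves. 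Finally, your expectation that $M$ comes out ``a scalar multiple of $Z^*$'' cannot be right even heuristically: $M$ is constrained to vanish exactly on the off-diagonal positions where $Z^*$ is supported (and to satisfy $M_{0i}=M_{ii}$, which $Z^*$ violates), so the only multiple of $Z^*$ meeting the constraints is $0$; the correct conclusion is simply $M=0$, reached entrywise.
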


\begin{proof}

We show that for any odd $n,$ the only symmetric matrix $M \in \mathbb{R}^{(1+n) \times (1+n)} $ satisfying 
\be 
M_{00}=0, \ M_{0i}=M_{ii}, \ M_{ij}=0\ ( \forall \, i \sim j), \ MZ^*=0,
\ee
is the matrix $M=0$, where $i \sim j$ here refers to an edge in the $\overline{C_n}$ graph. Barring the $MZ^* =0$ constraint, the rest already guarantee that there are at most $2n$ potentially non-zero entries in the $M$ matrix (not counting the repeated entries) corresponding to $\overline{C_n}$ graph. Let the first row of $M$ be $(0,m_1,m_2,\cdots, m_n)$. We fill the rest of the potential non-zero slots in $M$ with $m_{n+1},m_{n+2},\cdots, m_{2n}$. For example for $n=7$, we have 
\be\label{x_7}
M_7=\left(\begin{array}{c|ccccccc} 
0 & m_1 & m_2 & m_3 & m_4 & m_5 & m_6 & m_7\\
\hline
m_1 & m_1  & m_8&  0& 0 & 0 & 0 & m_{14}\\
m_2 & m_8 & m_2  & m_9 & 0 & 0 & 0 & 0 \\
m_3 & 0 & m_9 & m_3  & m_{10} & 0 & 0 & 0\\
m_4 & 0 & 0 & m_{10} & m_4 & m_{11} &0 & 0\\
m_5 & 0 & 0 & 0 & m_{11} & m_5  & m_{12} & 0  \\
m_6 & 0 & 0 & 0 & 0  & m_{12} & m_6 & m_{13}  \\
m_7 &  m_{14} & 0 & 0 & 0 & 0 & m_{13} & m_7    
\end{array}\right), 
\ee

 For notational convenience, let 
$ MZ^*=\left(\begin{smallmatrix} 
{q}& {r^\top}\\ 
{r} & {T}
\end{smallmatrix}\right)$, where $T \in \mathbb{R}^{n \times n}$,  $q \in \mathbb{R}$ and $r \in \mathbb{R}^{n \times 1}$. 
 In the rest of this section we use the notation $\underline{i}$ to denote $i \mod n,$ where $i$ is an integer.  The linear equation corresponding to $q$ implies that
 \beq
\label{trace}
 \sum_{i=1}^n m_i = 0.
\enq 
For $i \in \{ 1,2,\cdots, n \}$, the $2n$ linear equations corresponding to ${ T}_{i,\underline{i+1}}$ and ${ T}_{i,\underline{i-1}}$ imply 
\beq
\label{equaln+}
m_{n+1} = m_{n+2} = \cdots =m_{2n}.
\enq
Now consider the $n$ linear equations corresponding to $r$. These equations along with \ref{equaln+} imply 
\beq
\label{equaln-}
m_{1} = m_{2} = \cdots =m_{n}.
\enq
Using \ref{equaln-} along with \ref{trace} we have 
\beq
\label{zero-}
m_{1} = m_{2} = \cdots =m_{n} = 0.
\enq
Finally, using the equations corresponding to $r$ again and \ref{zero-} we have 
\beq
\label{zero+}
m_{n+1} = m_{n+2} = \cdots =m_{2n} = 0,
\enq
implying that $M = 0$, as desired. 
\end{proof}

\section{Dimension for optimal violation of anti-hole inequalities} \label{appDim}
\begin{theorem}
Given an anti-hole non-contextuality inequality with an odd number of  $n$ measurement
events, the quantum system achieving the optimal quantum bound must
be at least $(n-2)$ dimensional.
\end{theorem}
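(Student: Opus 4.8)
The plan is to identify the minimal dimension of an optimal realization with the rank of its Gram matrix, and then to pin that rank down using the circulant structure forced by the symmetry of $\overline{C_n}$. First I would observe that any realization $\{\ket{w_i}\}_{i=0}^n$ attaining $\vartheta(\overline{C_n})$ furnishes an optimal solution $X=\mathrm{Gram}(w_0,\ldots,w_n)$ of the primal SDP $(P_{\overline{C_n}})$, and that the dimension of the Hilbert space needed to carry the realization is exactly $\mathrm{rank}(X)$. Hence the theorem is equivalent to the claim that $\mathrm{rank}(X)\ge n-2$ for every optimal $X$.

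Next I would invoke the machinery behind the self-testing result, namely Theorems \ref{Const}, \ref{nondegeneracy}, and \ref{res:uniqueness}: the dual nondegeneracy condition \eqref{algcon} forces the primal optimum of $(P_{\overline{C_n}})$ to be unique. Since the cyclic shift $i\mapsto i+1 \pmod n$ is an automorphism of $\overline{C_n}$ and maps optimal solutions to optimal solutions, uniqueness forces $X^*$ to be invariant under it, i.e.\ circulant (exactly as in the proof of Theorem \ref{Const} for $C_n$). Every optimal realization therefore shares the same circulant Gram matrix $X^*$, and it suffices to compute $\mathrm{rank}(X^*)$.

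The heart of the argument is then an eigenvalue count on the lower-right $n\times n$ block $G=\mathrm{Gram}(w_1,\ldots,w_n)$ of $X^*$. Because vertices $i,j$ are adjacent in $\overline{C_n}$ unless $|i-j|\equiv 0,\pm 1 \pmod n$, the orthogonality constraints $\braket{w_i}{w_j}=0$ make $G$ circulant and tridiagonal with wrap-around, $G=\mathrm{circ}(g_0,g_1,0,\ldots,0,g_1)$ where $g_0=\|w_1\|^2>0$. Its eigenvalues are $\mu_k=g_0+2g_1\cos(2\pi k/n)$ for $k=0,\ldots,n-1$, which for odd $n$ satisfy $\mu_k=\mu_{n-k}$. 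Since $\cos$ is injective on $[0,\pi]$, the equation $\mu_k=0$ can hold for at most one value of $k\in\{0,\ldots,(n-1)/2\}$; this produces at most two vanishing eigenvalues (a single one when $k=0$, the pair $\{k,n-k\}$ otherwise), while if $g_1=0$ then $G=g_0 I$ has full rank. In every case $\mathrm{rank}(X^*)\ge\mathrm{rank}(G)\ge n-2$, which is the claimed bound. Finally, to confirm tightness I would combine this with complementary slackness $X^*Z_n^*=0$ against the dual optimum $Z_n^*=\vartheta(\overline{C_n})\,\mathrm{Gram}(-v_0,v_1,\ldots,v_n)$ of Theorem \ref{Const}: since the odd-cycle configuration $\{v_0,\ldots,v_n\}$ is the three-dimensional umbrella, $\mathrm{rank}(Z_n^*)=3$, so $\mathrm{rank}(X^*)\le (n+1)-3=n-2$, forcing equality. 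The main obstacle is the eigenvalue count: one must argue carefully that the circulant--tridiagonal structure, together with injectivity of the cosine and the $k\leftrightarrow n-k$ pairing peculiar to odd $n$, permits at most two zero eigenvalues, and that the reduction to the circulant case (via uniqueness and the automorphism) is legitimate.
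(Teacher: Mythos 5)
Your proposal is correct and follows essentially the same route as the paper: both reduce the dimension bound to the rank of the unique (hence circulant) primal optimal solution of the Lov\'asz theta SDP for $\overline{C_n}$, and then count the zero eigenvalues of its circulant, tridiagonal-with-wraparound block. The only differences are refinements rather than a new approach: you derive the circulant structure from uniqueness plus the cyclic automorphism and bound the number of vanishing eigenvalues structurally (injectivity of $\cos$ on $[0,\pi)$ together with the $k\leftrightarrow n-k$ pairing), whereas the paper writes down the explicit optimal matrix and shows exactly two eigenvalues vanish, at $j=(n\pm 1)/2$; your closing complementary-slackness argument for tightness is extra, since the paper certifies attainability by exhibiting an explicit $(n-2)$-dimensional ensemble.
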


\begin{proof}
The value of a  non-contextuality inequality achievable within quantum theory is equal to the Lov\'asz theta number of the underlying graph $G$ and admits the SDP formulation \eqref{theta:primal}. The lower bound on the dimension of a quantum system achieving the
optimal quantum bound  is  the rank of the unique primal optimal 
matrix 
\begin{equation*}
\label{Xmatrix} 
{X}_n^{\star} = 
\left[  \begin{array} 
{c | c} 1 & \frac{\vartheta(\overline{C_n})}{n}e^\top\\ 
\hline  
\frac{\vartheta(\overline{C_n})}{n}e
& \mbox{circ}(u) \end{array} \right],
\end{equation*} 
where $e$ is the all-ones vector of length $n$, $\mbox{circ}(\cdot)$ is the circulant function that takes as input a $n$ dimension vector and outputs a $n \times n$ matrix with  the input vector as its top row and every subsequent row being one place right shifted modulo $n$ and $u = (\frac{\vartheta(\overline{C_n})}{n},\frac{n - \vartheta({C_n})}{2 \vartheta({C_n})^2},0,0,0,\cdots,0,0,\frac{n - \vartheta({C_n})}{2 \vartheta({C_n})^2})$.  
Since   $X_n^{\star}$ is
real, its rank over complex field is the same as over real field,
and equals to the number of nonzero  eigenvalues (with multiplicity).
Furthermore, a lower bound on the rank of $X_n^{\star}$ is given
by the rank of the lower right block matrix (the circulant portion).
The eigenvalues of a circulant matrix can be calculated easily using
the circulant vector. A few lines of algebra yields
the following expression for the eigenvalues of the lower right block
matrix, 
\[
\lambda_{j}=\frac{1}{\vartheta_n}+\frac{n-\vartheta_n}{\vartheta_n}\cos(\frac{2\pi j}{n})
\]
for $j\in[n]$ and $\vartheta_n$ denotes the Lov\'asz theta number
for the holes with odd $n.$ One can see that $\lambda_{j}\neq0$
unless $j=\frac{n-1}{2}$ or $\frac{n+1}{2}.$ Thus, the rank of the
circulant matrix is $n-2$ for all odd values of $n.$ Thus, the lower
bound on the rank of the optimal feasible matrix $X^{\star}$
is $n-2$ which is same as the lower bound on the dimension of the
desired quantum system.
\end{proof}

\section{Complex versus Real SDPs}

\begin{lemma}Consider a  real SDP 
$$ \underset{{ X}} \sup  \left\{ \la C,{ X}\ra : X \in \mathcal{S}^n_+,\ \la A_i,{ X} \ra=b_i \ (i\in [m]) \right\},$$
that admits a unique optimal solution $X^*$ witnessed by a dual nondegenerate  optimal solution  $Z^*$.
Then, the   SDP considered over the complex numbers, i.e., 
\be\tag{$P_\C$}
\underset{{ X}}{\sup} \left\{ \la C,{ X}\ra_\C : X \in \h^n_+,\ \la A_i,{ X} \ra_\C=b_i \ (i\in [m]) \right\},
\ee
still admits a unique  optimal solution, where $\la X,Y\ra_\C=\tr(X^\dagger Y)$ and $\h^n_+$ denotes  the set of $n\times n$ Hermitian positive semidefinite matrices. 

 \end{lemma}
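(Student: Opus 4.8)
The plan is to reduce complex uniqueness to a Hermitian strengthening of the dual nondegeneracy hypothesis, using crucially that the data $C,A_1,\dots,A_m$ are real symmetric. The clean first step is to observe that the dual program \eqref{dualintro} does not change when we pass to the complex setting: its variables $y_i$ are real and its slack matrix $\sum_i y_iA_i-C$ is real symmetric, hence lies in $\h^n_+$ exactly when it lies in $\mcs^n_+$. Thus the complex dual feasible set coincides with the real one, $Z^\ast$ remains dual optimal with unchanged value, and since $X^\ast$ is also feasible for the complex primal (a real PSD matrix is Hermitian PSD), weak duality forces the complex optimal value to equal the real one and to be attained at $X^\ast$.

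Next I would describe the complex optimal face through complementary slackness. If $\tilde X\in\h^n_+$ is any complex optimal solution, the vanishing duality gap gives $\la \tilde X,Z^\ast\ra_\C=0$, and since $\tilde X,Z^\ast$ are Hermitian PSD this yields $\tilde X Z^\ast=0$. Setting $M=\tilde X-X^\ast$ and using $X^\ast Z^\ast=0$, the matrix $M$ is Hermitian and satisfies $MZ^\ast=0$ and $\la A_i,M\ra_\C=0$ for every $i$. It therefore suffices to show that the nondegeneracy system $MZ^\ast=0,\ \tr(A_iM)=0$ forces $M=0$ \emph{when $M$ ranges over Hermitian matrices}, rather than only over the real symmetric ones covered by the hypothesis.

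To make progress I would split $M=M_R+iM_I$ into a real symmetric part $M_R$ and a real antisymmetric part $M_I$. Because $A_i$ and $Z^\ast$ are real, $MZ^\ast=0$ separates into $M_RZ^\ast=0$ and $M_IZ^\ast=0$, while $\tr(A_iM)=\tr(A_iM_R)+i\,\tr(A_iM_I)$ has automatically vanishing imaginary part, since the product of a symmetric and an antisymmetric matrix is traceless. Hence $M_R$ satisfies precisely the real nondegeneracy system, and the assumption that $Z^\ast$ is dual nondegenerate immediately gives $M_R=0$. The symmetric directions are thus controlled at no cost.

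The main obstacle is the imaginary antisymmetric part $M_I$, which is constrained only by $M_IZ^\ast=0$ and so is untouched by the real nondegeneracy hypothesis. Here I would invoke the two pieces of information not yet used: positive semidefiniteness of $\tilde X$ and strict complementarity \eqref{SCalg}. The conjugate $\overline{\tilde X}$ is likewise complex optimal (the data being real), so both $X^\ast+iM_I=\tilde X\succeq0$ and $X^\ast-iM_I=\overline{\tilde X}\succeq0$. Assuming (or first establishing) strict complementarity, $\mathrm{Range}(Z^\ast)=\ker(X^\ast)$, so $M_IZ^\ast=0$ together with the antisymmetry of $M_I$ forces $\mathrm{Range}(M_I)\subseteq\mathrm{Range}(X^\ast)$; that is, $M_I$ is supported on the subspace where $X^\ast$ is positive definite, and there the feasibility of $X^\ast\pm iM_I$ must be exploited to pin it down. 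I expect closing this step to be the crux of the whole argument: real dual nondegeneracy certifies uniqueness only along symmetric directions, so eliminating $M_I$ requires the Hermitian analogue of nondegeneracy---equivalently, that the compressions of the $A_i$ to $\mathrm{Range}(X^\ast)$ span the Hermitian and not merely the symmetric matrices there---which one would either add as a hypothesis or verify directly for the instance at hand, mirroring the computation carried out for the theta SDP in Theorem~\ref{nondegeneracy}.
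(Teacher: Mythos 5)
Your reduction is correct and is in fact cleaner than the paper's: the paper instead ``realifies'' $(P_\C)$ into a $2n\times 2n$ real SDP in the block variable $\bpm X & -Y\\ Y & X\epm$ and tries to show that $\bpm Z^* & 0\\ 0 & Z^*\epm$ is dual nondegenerate for the realification. Your treatment of the symmetric part ($M_R=0$ from real nondegeneracy, using that $\tr(A_iM_I)=0$ holds automatically for symmetric $A_i$ and antisymmetric $M_I$) matches their handling of the diagonal blocks $M_1=M_3$. But the step you isolate as the crux is precisely where the paper's argument collapses: in their nondegeneracy computation the off-diagonal block $M_2$ is your $M_I$ in disguise. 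The constraints encoding Hermiticity ($Y+Y^T=0$) force only $M_2+M_2^T=0$ (as written, their single-entry matrices $\bpm 0 & E_{ij}\\ E_{ij}^T & 0\epm$ satisfy $\bpm 0 & E_{ij}\\ E_{ij}^T & 0\epm\bullet W = 2W^{(12)}_{ij}$ and hence force $Y=0$ outright, destroying the claimed bijection with $(P_\C)$; correctly symmetrized they yield exactly antisymmetry of $M_2$). The inferred condition $\la M_2,A_i\ra=0$ is then vacuous, and the final appeal ``$M_2Z^*=0,\ \la M_2,A_i\ra=0\Rightarrow M_2=0$'' applies the dual nondegeneracy hypothesis---which quantifies over \emph{symmetric} $M$ only---to an antisymmetric matrix. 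So the paper never controls the direction you refused to wave away.

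Nor can it: the lemma as stated is false whenever $\mathrm{rank}(X^*)\ge 2$. Take any real antisymmetric $K\ne 0$ supported on the range of $X^*$; then for small $K$ the matrix $X^*+iK$ is Hermitian PSD, feasible and optimal for $(P_\C)$, since $\tr(A_iK)=\tr(CK)=0$ for real symmetric data. Concretely, with $n=2$, $A_1=E_{11}$, $A_2=E_{22}$, $A_3=(E_{12}+E_{21})/2$, $b=(1,1,0)$, $C=0$: the real SDP has the unique optimum $X^*=I$, and $Z^*=0$ is dual optimal and dual nondegenerate (orthogonality to $A_1,A_2,A_3$ kills every symmetric $M$), yet every $\bpm 1 & i\beta\\ -i\beta & 1\epm$ with $|\beta|\le 1$ is optimal for $(P_\C)$. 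The same perturbation applies verbatim to the paper's own use-case, the theta SDP of $\overline{C_n}$ where $\mathrm{rank}(X^*)=n-2\ge 3$. The resolution, which you half-see in your closing remark, is that $(P_\C)$ with the original real symmetric $A_i$ is not the right complexification: the imaginary parts of constraints such as $X_{ij}=0$ for $i\sim j$ must be imposed through non-real Hermitian constraint matrices like $\tfrac{i}{2}(E_{ij}-E_{ji})$, and then the Hermitian analogue of nondegeneracy you call for must be verified directly---these added matrices live exactly in the antisymmetric-imaginary directions, so it can succeed where the real hypothesis cannot. Your proposal is therefore incomplete as a proof, but your diagnosis of the obstruction is exactly right, and it pinpoints a genuine flaw in the paper's proof rather than a gap peculiar to your approach.
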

 
 \begin{proof}
First, we  show that the study of a complex SDP can be reduced to an equivalent real SDP. This fact is well known but we provide a brief argument for completeness. Indeed, 
  for any feasible solution $X=X_\R+iX_\C\in \h^n_+$, the constraint $\la A_i, X\ra_\C=b_i$  is equivalent to two constraints on its real and imaginary part, namely: $\la A_i, X_\R\ra=b_i$ and $ \la A_i,X_\C\ra=0$. Furthermore, checking whether $X_\R+iX_\C$  is Hermitian PSD is equivalent to    
$$ \begin{pmatrix}  X_\R  &  -X_\C\\ X_\C & X_\R\end{pmatrix} \in   \mathcal{S}^n_+.$$
Based on these observations we define the realification of $(P_\C)$ as the following SDP over the real numbers:
\be\tag{$P_\R$}
\begin{aligned}
 \sup_{X,Y}  &\  \la C,{ X}\ra  \\
 \text{ s.t.}  & \ \la A_i,{ X} \ra=b_i \ (i\in [m]) \\
 & \ \la A_i,{Y} \ra=0 \ (i\in [m]) \\
 &  \begin{pmatrix} X & -Y\\ Y & X\end{pmatrix} \in \mathcal{S}^{2n}_+.
 \end{aligned}
 \ee
 Clearly, the solutions of $(P_\C)$ are in bijection with the solutions of the  realification, and thus, to show that  $(P_\C)$  has a unique solution it suffices to show that $(P_\R)$ has a unique solution. 
 Bringing $(P_\R)$   into standard SDP form 
 we arrive at the formulation: 
  \be
\begin{aligned}  
 \sup_{W}  &\   \bpm C/2 & 0 \\ 0 & C/2\epm \bullet W \\
 \text{ s.t.}  & \  \bpm  A_i /2& 0 \\0 & A_i/2 \epm \bullet W =b_i \ (i\in [m]) \\
 & \ \bpm 0 & A_i/2 \\A_i/2  & 0\epm \bullet  W=0 \ (i\in [m]) \\
 & X=Z, \quad  Y+Y^T=0,\\
  &  W=\begin{pmatrix} X & Y\\ Y^T & Z\end{pmatrix} \in \mathcal{S}^{2n}_+,
\end{aligned}
 \ee
 whose  dual  is to minimize the function $\sum_{i=1}^m\lam_ib_i$ over all $\lambda_i, \mu_j, t_{ij}, z_{ij}$ satisfying  $$
 \begin{aligned}
 &\sum_{i=1}^m\lam_i \bpm A_i/2 & 0\\0 & A_i/2\epm+\sum_{i=1}^m \mu_i \bpm 0 & A_i/2\\A_i/2 & 0\epm  +\\ 
 & \sum_{ij=1}^nt_{ij} \bpm E_{ij} & 0\\ 0 & -E_{ij}\epm + \sum_{ij=1}^nz_{ij}\bpm 0& E_{ij} \\ E_{ij}^T& 0\epm -\bpm C/2 & 0\\ 0 & C/2\epm  \succeq  0.
 \end{aligned}
 $$
 We conclude the proof by showing that $\bpm Y^* & 0\\ 0 & Y^*\epm$ is a dual nondegenerate optimal solution for the realification. First, by dual feasibility we have   $Y^*=\sum_{i=1}^m y^*_iA_i-C$  for appropriate  scalars $y^*_i$. Setting   $\lam_i=y^*_i$ and  all other dual variables to zero, we have established feasibility. Second, to show optimality note that $\bpm X^*& 0 \\0 & 0\epm$ is optimal for the realification, and furthermore, 
  $\bpm X^*& 0 \\0 & 0\epm \bullet \bpm Y^* & 0\\ 0 & Y^*\epm=~0$.
  Lastly, to check nondegeneracy consider a symmetric matrix $M=\bpm M_1& M_2\\\ M_2^T& M_3\epm$  satisfying
 \be\label{xsdcvf}
 0=\bpm M_1& M_2\\\ M_2^T& M_3\epm\bpm Y^* & 0\\ 0 & Y^*\epm
 \ee
 and 
 \be\label{cweve}
 \begin{aligned}
 0 &=  M \bullet  \bpm A_i/2& 0 \\ 0 & A_i/2\epm =M \bullet  \bpm 0 & A_i/2 \\ A_i/2 & 0\epm= \\ & = M
   \bullet \bpm E_{ij} & 0\\ 0 & -E_{ij}\epm=M\bullet\bpm 0& E_{ij} \\ E_{ij}^T& 0\epm.
   \end{aligned}
 \ee
 Now,  constraint \eqref{xsdcvf} is equivalent to
 $$M_1Y^*=M_2Y^*=M_3Y^*=0.$$
Furthermore, using  \eqref{cweve}, from the third equation we get $M_1=M_3$, from the first one we get $\la M_1, A_i\ra=0$ and from the second one $\la M_2, A_i\ra=0$. Summarizing, for all $k=1,2,3$ we have that 
$$M_kZ^*= 0 \text{ and } \la M_k, A_i\ra=0  \ (i\in [m]).$$
As $Z^*$ is dual nondegenerate  it has the property that   for any  $M \in\mathcal{S}^n$:
$$MZ^*= \la M, A_i\ra=0\ \forall i  \implies M=0.$$
Putting everything together  we get~$M_1=M_2=M_3=0$. 
 
\end{proof}

\section{Not all non-contextuality inequalities admit self-testing}\label{counterex}
We have proved that all fundamental non-contextuality inequalities admit self-testing. A natural question is  whether every non-contextuality inequality with  separation between corresponding non-contextual hidden variable bound and quantum bound admits self-testing.  Below we provide an explicit non-contextuality inequality  which shows  that the answer is negative. 
In graph theoretical terms, we  identify a  non-perfect graph whose  Lov\'asz theta SDP admits multiple primal optimal solutions. We make crucial use of the following result \cite[Theorem 5]{alizadeh} to determine the (non)uniqueness of primal optimal under strict complementarity, see also \cite{thinh2018structure}.

\begin{theorem} \label{SCalizadeh}
Let $(X^*, Z^*)$ be a pair of primal and dual optimal solutions satisfying 
strict complementarity.
Then, uniqueness of $X^*$ implies that $Z^*$ is dual nondegenerate.
\end{theorem}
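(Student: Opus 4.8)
The plan is to prove the stronger \emph{equivalence}---that under strict complementarity, uniqueness of $X^*$ is equivalent to dual nondegeneracy of $Z^*$---by reducing both properties to a single linear-algebraic condition after passing to coordinates adapted to the complementary pair. Write $r=\mathrm{rank}(X^*)$ and $s=\mathrm{rank}(Z^*)$, so that $r+s=n$. Since $\tr(X^*Z^*)=0$ with both matrices PSD, their ranges are orthogonal, and strict complementarity forces $\R^n=\mathrm{Range}(Z^*)\oplus\mathrm{Range}(X^*)$ as an orthogonal direct sum. First I would fix an orthonormal basis adapted to this decomposition, in which
$$Z^*=\begin{pmatrix}\Lambda&0\\0&0\end{pmatrix},\qquad X^*=\begin{pmatrix}0&0\\0&\widehat X\end{pmatrix},$$
with $\Lambda\in\mathcal{S}^s$ and $\widehat X\in\mathcal{S}^r$ both \emph{positive definite}; crucially, $\widehat X$ has full rank $r$ precisely because strict complementarity yields $\mathrm{Range}(X^*)=\ker(Z^*)$. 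Let $\widehat A_i$ denote the lower-right $r\times r$ block of $A_i$ in this basis.

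Next I would characterise the primal optimal face. By strong duality and complementary slackness, a primal feasible $X$ is optimal iff $\langle X,Z^*\rangle=0$, i.e. iff $\mathrm{Range}(X)\subseteq\ker(Z^*)$; in the chosen coordinates this says exactly that $X=\mathrm{diag}(0,X_{22})$ with $X_{22}\in\mathcal{S}^r_+$ satisfying the reduced equations $\langle\widehat A_i,X_{22}\rangle=b_i$. Thus the optimal face is an affine slice of $\mathcal{S}^r_+$ through the point $\widehat X\succ 0$. Because $\widehat X$ lies in the relative interior of $\mathcal{S}^r_+$, the face collapses to the singleton $\{X^*\}$ if and only if the affine slice does, which holds if and only if the homogeneous system $\langle\widehat A_i,N\rangle=0$ $(i\in[m])$ admits only $N=0$ in $\mathcal{S}^r$: if a nonzero such $N$ existed, $\widehat X\pm tN$ would stay feasible and PSD for small $t$, contradicting uniqueness. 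This establishes \emph{uniqueness of $X^*$ $\iff$ the blocks $\{\widehat A_i\}$ span $\mathcal{S}^r$.}

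On the dual side I would analyse the nondegeneracy system directly. The constraint $MZ^*=0$ forces $\mathrm{Range}(M)\subseteq\mathrm{Range}(Z^*)^{\perp}=\ker(Z^*)$, so in the same coordinates $M=\mathrm{diag}(0,M_{22})$ with $M_{22}\in\mathcal{S}^r$ arbitrary, and then $\langle M,A_i\rangle=\langle M_{22},\widehat A_i\rangle$. Hence dual nondegeneracy of $Z^*$---the requirement that $MZ^*=0$ together with $\langle M,A_i\rangle=0$ for all $i$ force $M=0$---is \emph{literally the same} condition as above: $\langle\widehat A_i,N\rangle=0\ \forall i\Rightarrow N=0$ in $\mathcal{S}^r$. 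Combining the two characterisations yields the equivalence, and in particular the desired implication.

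The step I expect to be the crux is the reduction in the second paragraph: the argument that uniqueness collapses the optimal face to a point relies essentially on $\widehat X$ being \emph{strictly} positive definite within the $r$-dimensional range block, and this is exactly what strict complementarity provides, since it pins $\ker(Z^*)$ down to be no larger than $\mathrm{Range}(X^*)$. Without this hypothesis, $\ker(Z^*)$ could strictly contain $\mathrm{Range}(X^*)$, $\widehat X$ would fail to be a relative-interior point of the reduced feasible set, and the ``perturb in both directions'' argument would break. Tracking precisely where strict complementarity is invoked, and verifying that the block-diagonal normal forms are legitimate under the orthogonal decomposition, is where care is needed.
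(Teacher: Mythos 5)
Your proof is correct. One important point of comparison: the paper does not prove this statement at all --- it imports it verbatim as Theorem 5 of Alizadeh, Haeberly and Overton (the reference \cite{alizadeh}), so there is no internal proof to measure you against. What you have written is a correct, self-contained reconstruction of essentially the standard argument from that literature: pass to an orthonormal basis adapted to the orthogonal splitting $\R^n=\mathrm{Range}(Z^*)\oplus\mathrm{Range}(X^*)$ (which exists because $\Tr(X^*Z^*)=0$ forces $X^*Z^*=0$, hence orthogonal ranges, and strict complementarity makes the ranks sum to $n$); observe that complementary slackness identifies the primal optimal face with the affine slice $\{X_{22}\in\mathcal{S}^r_+:\langle\widehat A_i,X_{22}\rangle=b_i\}$ passing through the positive definite point $\widehat X$; and note that dual nondegeneracy, via $MZ^*=0\Rightarrow M=\mathrm{diag}(0,M_{22})$, reduces to the very same spanning condition $\mathrm{span}\{\widehat A_i\}=\mathcal{S}^r$. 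The key steps all check out: the perturbation $\widehat X\pm tN$ legitimately stays feasible and optimal for small $t$ precisely because $\widehat X\succ 0$, and you correctly flag that this positive definiteness is exactly what strict complementarity buys (it pins $\ker(Z^*)$ down to equal $\mathrm{Range}(X^*)$, so the two reduced systems live over the same $r\times r$ block). As a bonus your argument gives the full equivalence under strict complementarity, not just the implication the paper uses, and it makes visible that the converse direction (dual nondegeneracy implies uniqueness of $X^*$) does not actually need strict complementarity, since it only uses $\mathrm{Range}(X)\subseteq\ker(Z^*)$ for every optimal $X$.
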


The exclusivity graph of our counter-example is shown in Figure~\ref{fig:NST_6}. The corresponding canonical non-contextuality inequality is given by
\be
\label{eq:NST_6}
\sum_{i=1}^{6}p_i \leq 2,
\ee
whose  quantum bound  is equal to $\sqrt{5}$. 
\begin{figure}[H]
\centering
\includegraphics[width=0.2\textwidth]{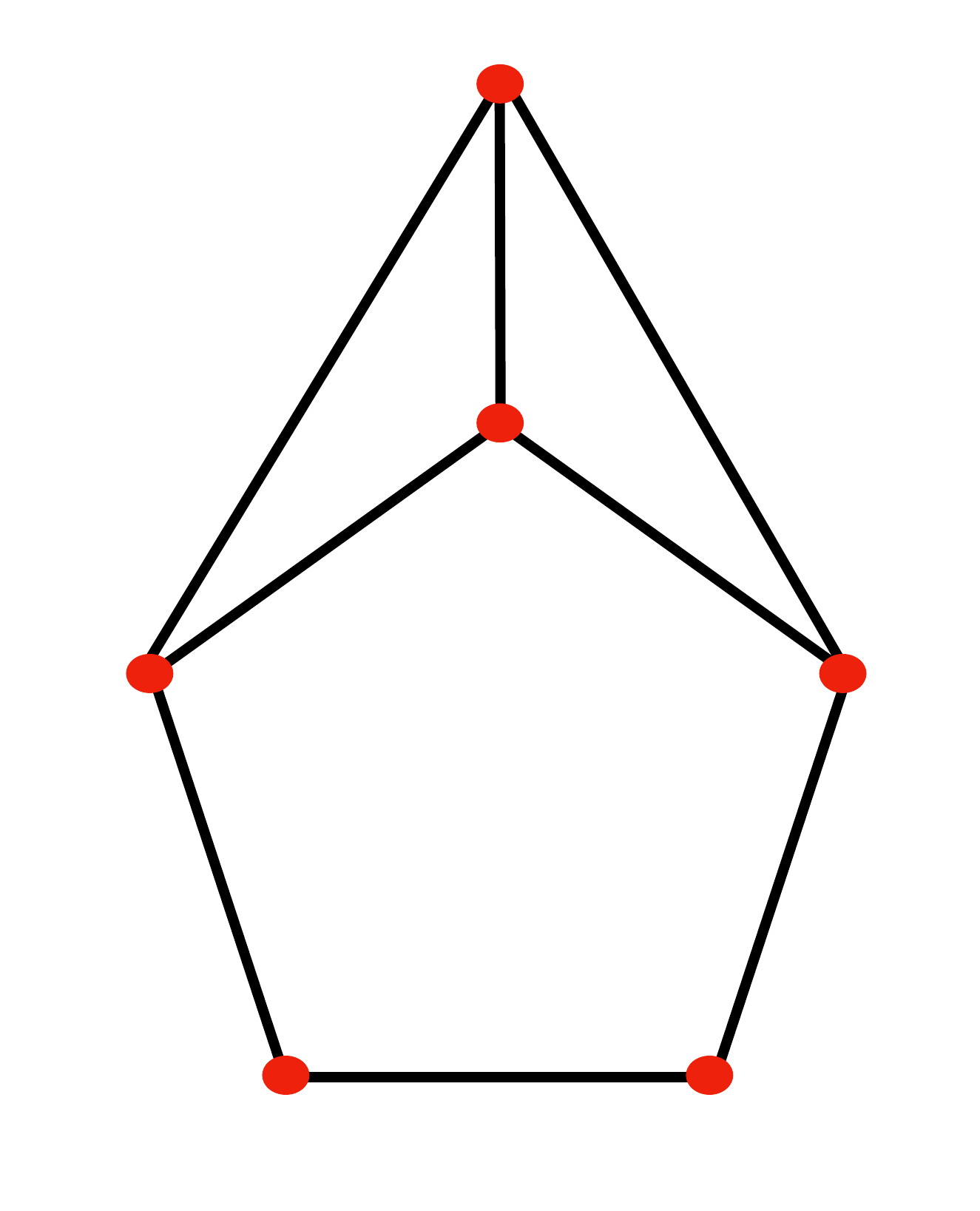}
\caption{The above exclusivity graph corresponds to the canonical non-contextuality inequality with minimal  number of measurement events which doesn't admit self-testing.}
\label{fig:NST_6}
\end{figure}
Consider the   pair of primal-dual optimal solutions 
 \be\label{NST_6_matrix}
{ Z}^{\star} = \left[ 
\begin{array} {c | c  c   c  c  c  c}
\sqrt{5} & -1 & -1 & -1 & -1 & -1 & -1\\
\hline 
-1 &1 & 0 & c & c & 0 & c \\
-1 & 0 & 1 & 0 & c & c & c \\
-1 & c & 0 & 1 & 0 & c & 0 \\
-1 & c & c & 0 & 1 & 0 & 1 \\
-1 & 0 & c & c & 0 & 1 & 0 \\
-1 & c & c & 0 & 1 & 0 & 1 \\
\end{array}
\right],
\ee
where $c = \frac{\sqrt{5}-1}{2}$ and 
 \be\label{NST_6_primal}
{ X}^{\star} = \left[ 
\begin{array} {c | c  c   c  c  c  c}
1 & f & f & f & h & f & h\\
\hline 
f &f & k & 0 & 0 & k & 0 \\
f & k & f & k & 0 & 0 & 0 \\
f & 0 & k & f & r & 0 & r \\
h & 0 & 0 & r & h & r& 0 \\
f & k & 0 & 0 & r & f & r \\
h & 0 & 0 & r & 0 & r & h \\
\end{array}
\right],
\ee
where $f = \frac{1}{\sqrt{5}}$, $h = \frac{f}{2}$, $k = \frac{5-\sqrt{5}}{10}$ and $r = \frac{k}{2}$. Since ${\rm rank}(Z^*)=3$ and ${\rm rank}({X})^{\star}=4$,
 strict complementarity holds. Using Theorem \ref{SCalizadeh}, the uniqueness of $X^*$   implies dual nondegeneracy. To determine dual nondegeneracy for $Z^*$ 
 we (once again) resort to solving system of linear equations.
The symmetric variable matrix $M$ is given by
 \be\label{NST_6_M}
{ M} = \left[ 
\begin{array} {c | c  c   c  c  c  c}
0 & m_0 & m_1 & m_2 & m_3 & m_4 & m_5\\
\hline 
m_0 &m_0 & m_6 & 0& 0 & m_7 & 0 \\
m_1 & m_6 & m_1 & m_8 & 0 & 0 & 0 \\
m_2 & 0 & m_8 & m_2 & m_9 & 0 & m_{10} \\
m_3 & 0 & 0 & m_9 & m_3 & m_{11} & 0 \\
m_4 & m_7 & 0 & 0 & m_{11} & m_4 & m_{12} \\
m_5 & 0 & 0 & m_{10} & 0& m_{12} & m_5 \\
\end{array}
\right],
\ee
Solving for the linear systems of equations ${ Z}^{\star} M = 0$, we get $m_{11} = -m_{12}$, 
$m_{10} = m_{12}$, 
$m_5 =  \frac{1+\sqrt{5}}{2} m_{12}$, 
$m_9 = -m_{12}$, 
$m_3 = -\left(\frac{1+\sqrt{5}}{2}\right)m_{12}$, 
$m_0 = m_1 = m_2 = m_4 = m_6 = m_7 = m_8  = 0$. For example, if we set $m_{12} = 1$, we can get a consistent assignment of $m_i$, from $i=0$ to $12$, which isn't all zero. Hence, the dual solution $Z^*$ is degenerate, which together with strict complementarity implies that the primal is not unique. Thus the non-contextuality inequality in \eqref{eq:NST_6} does not  admit self-testing. 

We also report that we found several other non-perfect graphs (and equivalently non-contextuality inequalities) which do not admit self-testing. Identifying the exact classes of graphs which admit self-testing will be interesting but we leave that as an open question. 

\end{document}